\definecolor{Darkblue}{rgb}{0,0,0.4}
\definecolor{Brown}{cmyk}{0,0.81,1.,0.60}
\definecolor{Purple}{cmyk}{0.45,0.86,0,0}
\newcommand{\lref}[2][]{\hyperref[#2]{#1~\ref*{#2}}}
\newcommand{\RR}{\ensuremath{\mathbb{R}}}
\newcommand{\calX}{\mathcal{X}}
\newcommand{\e}{{\mathrm e}}
\renewcommand{\Pr}[1]{\mbox{\rm\bf Pr}\left[#1\right]}
\newcommand{\Ex}[1]{\mbox{\rm\bf E}\left[#1\right]}
\newcommand{\E}{\mbox{\rm\bf E}}
\newcommand{\replaced}[2]{
 \textcolor{black}{#2}
}
\newcommand{\growingmid}{\mathrel{}\middle|\mathrel{}}
\newenvironment{myproof}{\paragraph{\textit{\textmd{Proof of Theorem 1.}}}}
\begin{document}
\title{Smoothness for Simultaneous Composition of\\Mechanisms with Admission\thanks{This work is supported by DFG through Cluster of Excellence MMCI.}}
\author{Martin Hoefer \and Thomas Kesselheim \and Bojana Kodric}
\institute{MPI Informatik and Saarland University, Saarbr\"ucken, Germany\\
\email{\{martin.hoefer,thomas.kesselheim,bojana.kodric\}@mpi-inf.mpg.de}}

\maketitle

\begin{abstract}
We study social welfare of learning outcomes in mechanisms with admission. In our repeated game there are $n$ bidders and $m$ mechanisms, and in each round each mechanism is available for each bidder only with a certain probability. Our scenario is an elementary case of simple mechanism design with incomplete information, where availabilities are bidder types. It captures natural applications in online markets with limited supply and can be used to model access of unreliable channels in wireless networks.
If mechanisms satisfy a smoothness guarantee, existing results show that learning outcomes recover a significant fraction of the optimal social welfare. These approaches, however, have serious drawbacks in terms of plausibility and computational complexity. Also, the guarantees apply only when availabilities are stochastically independent among bidders.
In contrast, we propose an alternative approach where each bidder uses a single no-regret learning algorithm and applies it in all rounds. This results in what we call \emph{availability-oblivious} coarse correlated equilibria. It exponentially decreases the learning burden, simplifies implementation (e.g., as a method for channel access in wireless devices), and thereby addresses some of the concerns about Bayes-Nash equilibria and learning outcomes in Bayesian settings. Our main results are general composition theorems for smooth mechanisms when valuation functions of bidders are lattice-submodular. They rely on an interesting connection to the notion of correlation gap of submodular functions over product lattices.
\end{abstract}
%\keywords{Smoothness $\cdot$ Price of Anarchy $\cdot$ Bayesian Coarse Correlated Equilibrium $\cdot$ Correlation Gap}
% !TEX root = main.tex

\section{Introduction}

Truthful mechanism design is a central challenge at the intersection of economics and computer science, but many fundamental techniques are only very rarely used in practice. For example, sponsored search auctions are used on a daily basis and generate billions of dollars in revenue, but they are based on simple and non-truthful procedures to allocate ads on search result pages. In contrast, truthful mechanisms often involve heavy algorithmic machinery, complicated allocation techniques, or other hurdles to easy and transparent implementation. %To a different extent, the same is true for truthful mechanisms in combinatorial auctions that provide good approximations for revenue or social welfare. These mechanisms rely on heavy algorithmic machinery (solving convex programs or decompositions based on the ellipsoid method) and are not easily applicable in a time-sensitive or dynamic environment.

A recent trend is to study non-truthful and conceptually ``simple'' mechanisms for allocation in markets and their inherent loss in system performance. The idea is to analyze the induced game among the bidders and bound the quality of (possibly manipulated) outcomes in equilibrium. In a seminal paper, Syrgkanis and Tardos \cite{SyrgkanisT13} propose a general technique for bounding social welfare of these equilibria, based on a so-called ``smoothness'' technique. These guarantees apply even to mixed Bayes-Nash equilibria in environments with composition of mechanisms. For example, in a combinatorial auction we might not sell all items via a complicated truthful mechanism, but instead sell each item simultaneously via simple individual single-item auctions. Such a mechanism is obviously not truthful, since bidders are not even able to express their valuations for all subsets of items. However, if bidders have complement-free XOS valuations, the (expected) social welfare of allocations in a mixed Bayes-Nash equilibrium turns out to be a constant-factor approximation of the optimal social welfare. 

While this is a fundamental insight into non-truthful mechanisms, it is not well-understood how this result extends under more realistic conditions. In particular, there has been recent concern about the plausibility and computational complexity of exact and approximate Bayes-Nash equilibria~\cite{CaiP14}. For more general Bayesian concepts based on no-regret learning strategies in repeated games, there are two natural approaches -- either bidder types are drawn newly with bids, or types are drawn only once initially. While the latter is not really in line with the idea of incomplete information (bidders could communicate their type in the course of learning, see~\cite{CaiP14}), the former is in general hard to obtain. Also, the composition theorem applies only if bidders' types are drawn independently.

In this paper, we study a variant of simultaneous composition of mechanisms and show how to avoid the drawbacks of the Bayesian approach. Our scenario is motivated by limited availability or admission: Suppose bidders try to acquire items in a repeated online market, in which $m$ items are sold simultaneously via, say, first-price auctions. However, in each round only some of the items are actually available for purchase. This scenario can be phrased in the Bayesian framework when bidder $i$'s type is given by the set of items available to him. To obtain an equilibrium in the Bayesian sense, each bidder would have to consider a complicated bid vector and satisfy an equilibrium condition for each of the possible $2^m$ subsets of items. %Also, to apply the composition theorem above, the availabilities of items to bidders are drawn independently.

In contrast, here we assume that bidders do not even get to know (or are not able to account for) \emph{their own availabilities} before making bids in each round. We assume they learn with no-regret strategies in a way that is \emph{oblivious} to their own and all other bidders' availabilities. Thereby, bidders arrive at what one might term an \emph{availability-oblivious} coarse-correlated equilibrium -- a bid distribution not tailored to the specific availabilities of bidders, which can be computed (approximately) in polynomial time. Our main result is that for a large class of valuation functions, we can apply smoothness ideas in this framework and prove bounds that mirror the guarantees above. The guarantees apply even if some bidders learn obliviously and others follow a Bayes-Nash bidding strategy. In particular, we cover a broad domain with simultaneous composition of weakly smooth mechanisms in the sense of~\cite{SyrgkanisT13} when bidders have lattice-submodular valuations. Our study covers cases where availabilities are correlated among bidders and provides lower bounds for combinatorial auctions with item-bidding and XOS valuations. As a part of our analysis, we use the concept of correlation gap from~\cite{AgrawalDis11} for submodular functions over product lattices. %Our proof simplifies and slightly extends the result for submodular functions in~\cite{AgrawalDis11}.

%Perhaps surprisingly, 
%Our results also have implications beyons auctions, e.g., for spectrum access in wireless networks. In recent years, no-regret learning algorithms have been successfully applied to coordinate secondary spectrum access~\cite{AsgeirssonM11,DamsHK13}. Here each bidder is a wireless device, each item corresponds to a channel. We show how to interpret this approach as a composition of smooth all-pay auctions, where each channel is sold independently via a separate all-pay auction to a conflict-free subset of devices. Availability becomes a very natural condition: A channel is unavailable to a device if it is currently congested by a different subsystem or blocked due to exclusive access of a licensed primary user. Even though it ignores channel availability, our simple oblivious learning approach can obtain a competitive solution with significant throughput.

\subsection{Our Contribution}

We assume that every mechanism satisfies a weak smoothness bound (for more details see Section~\ref{sec:model} below) with parameters $\lambda, \mu_1, \mu_2\ge0$. It is known that for each individual mechanism, this implies an upper bound of $(\max(1,\mu_1)+\mu_2)/\lambda$ on the price of anarchy for no-regret learning outcomes and Bayes-Nash equilibria. Furthermore, the same bound also applies for outcomes of multiple simultaneous mechanisms that are tailored to availabilities, i.e., not oblivious.
%The price of anarchy for no-regret learning outcomes tailored to availabilities becomes $(\max(1,\mu_1)+\mu_2)/\lambda$, and this extends to composition of mechanisms and Bayes-Nash equilibria.

In Section~\ref{sec:independent} we consider smoothness for oblivious learning and composition with independent availabilities, where in each round $t$, each mechanism $j$ is available to each bidder $i$ independently with probability $q_{i,j}$. Our smoothness bound involves the above parameters and the correlation gap of the class of valuation functions. In particular, if valuations $v_i$ come from a class $\mathcal{V}$ with a correlation gap of $\gamma(\mathcal{V})$, the price of anarchy becomes $\gamma(\mathcal{V})\cdot(\max(1,\mu_1)+\mu_2)/\lambda$. 

Our construction uses smoothness of simultaneous composition from~\cite{SyrgkanisT13}. However, since learning is oblivious, the deviations establishing smoothness must be independent of availability. Here we use correlation gap to relate the value for independent deviations to that of type-dependent Bayesian deviations. Correlation gap is a notion originally defined for submodular set functions in~\cite{AgrawalDSY10}. It captures the worst-case ratio between the expected value of independent and correlated distributions over elements with the same marginals. We use an extension of this notion from~\cite{AgrawalDis11} to Cartesian products of outcome spaces such as product lattices. For the class $\mathcal{V}$ of monotone lattice-submodular valuations, we prove a correlation gap of $\gamma(\mathcal{V}) = e/(e-1)$, which simplifies and slightly extends previous results.

In Section~\ref{sec:correlated}, we analyze oblivious learning for composition with correlated availabilities in the form of ``everybody-or-nobody'' -- each mechanism is either available to all bidders or to no bidder. The probability for availability of mechanism $j$ is $q_j$, and availabilities are independent among mechanisms. In this case, %we cannot directly relate our result to bounds for mixed Bayes-Nash equilibria, since the types of bidders are now correlated. Instead, 
we simulate independence by assuming that each bidder draws random types and outcomes for himself. We also consider distributions where outcomes are drawn independently according to the marginals from the optimal correlated distribution over outcomes. While these two distributions are directly related via correlation gap, the technical challenge is to show that there is a connection to the value obtained by the bidder. For lattice-submodular functions, we show a smoothness bound that implies a price of anarchy of $4e/(e-1)\cdot(\max(1,\mu_1)+\mu_2)/\lambda^2$. 

For neither of the results is it necessary that all bidders follow our oblivious-learning approach. We only require that bidders have no regret compared to this strategy. This is also fulfilled if some or all bidders determine their bids based on the actually available items rather than in the oblivious way.

Finally, in Section~\ref{sec:xosLower} we show a lower bound for simultaneous composition of single-item first-price auctions with general XOS valuation functions. The correlation gap for such functions is known to be large~\cite{AgrawalDSY10}, but this does not directly imply a lower bound on the price of anarchy for oblivious learning. We provide a class of instances where the price of anarchy for oblivious learning becomes $\Omega((\log m)/(\log m \log m))$. This shows that for XOS functions it is impossible to generalize the constant price of anarchy for single-item first-price auctions. 

Our results have additional implications beyond auctions for the analysis of regret learning in wireless networks. We discuss these in Appendix~\ref{sec:wireless}.

\subsection{Related Work}

Closely related to our work are combinatorial auctions with item bidding, where multiple items are being sold in separate auctions. Bidders are generally interested in multiple items. However, depending on the bidder, some items may be substitutes for others. As the auctions work independently, bidders have to strategize in order to buy not too many items simultaneously. In a number of papers~\cite{ChristodoulouKS08,BhawalkarR11,HassidimKMN11,FeldmanFGL13} the efficiency of Nash and Bayes-Nash equilibria has been studied. It has been shown that, if the single items are sold in first or second price auctions and if the valuation functions are XOS or subadditive, the price of anarchy is constant. Limitations of this approach are shown in~\cite{ChristodoulouKST16,Roughgarden14}.

Many of these proofs follow a similar pattern, namely showing smoothness. This concept has been introduced by \cite{Roughgarden12CACM,Roughgarden12} to analyze correlated and Bayes-Nash equilibria of general games. In~\cite{SyrgkanisT13} it was adjusted to mechanisms, and it was shown that simultaneous or sequential composition of smooth mechanisms is again smooth. Combinatorial auctions with item bidding are an example of a simultaneous composition. To show smoothness of the combined mechanism, it is thus enough to show smoothness of each single auction. Other examples of smooth mechanisms are position auctions with generalized second price \cite{CaragiannisKKKLLT15,LemeT10} and greedy auctions \cite{LucierB10}. The smoothness approach for mixed Bayes-Nash equilibria shown in~\cite{SyrgkanisT13} is, in fact, slightly more general and continues to hold for variants of Bayesian correlated equilibrium~\cite{HartlineST15}.

The complexity of finding such equilibria has been studied only very recently. It has been shown in~\cite{CaiP14,DobzinskiFK15} that equilibria are hard to find in some settings. In contrast, in~\cite{DevanurMSW15} a different auction format is studied that yields good bounds on social welfare for equilibria that can be found more easily. Although similar in spirit, our approach is different -- it shows that in some scenarios agents can reduce the computational effort and still obtain reasonably good states with existing mechanisms. 

As such, our approach is closer to recent work~\cite{DaskalakisS16} that shows hardness results for learning full-information coarse-correlated equilibria in simultaneous single-item second-price auctions with unit-demand bidders. As a remedy, a form of so-called no-envy learning is proposed, in which bidders use a different form of bidding that enables convergence in polynomial time. While achieving a general no-regret guarantee against all possible bid vectors is hard, we note here that our approach based on smoothness requires only a guarantee with respect to bids that are derived directly from the XOS representation of the bidder valuation. As such, bidders can obtain the guarantees required for our results in polynomial time. Conceptually, we here treat a different problem -- the impact of availabilities, and more generally, different bidder types on learning outcomes in repeated mechanism design.

A model with dynamic populations in games has recently been considered in~\cite{LykourisST16}. Each round a small portion of players are replaced by others with different utility functions. When players use algorithms that minimize a notion called adaptive regret, smoothness conditions and the resulting bounds on the price of anarchy continue to hold if there are solutions which remain near-optimal over time with a small number of structural changes. Using tools from differential privacy, these conditions are shown for some special classes of games, including first-price auctions with unit-demand or gross-substitutes valuations. In contrast, our scenario is orthogonal, since we consider much more general classes of mechanims and allow changes in each round for possibly all players. However, our model of change captures the notion of availability and therefore is much more specific than the adversarial approach of~\cite{LykourisST16}.

The notion of correlation gap was defined and analyzed for stochastic optimization in~\cite{AgrawalDSY10,AgrawalDis11}. The notion was used in~\cite{Yan11} for analyzing revenue maximization with sequential auctions, which is very different from our approach.

\section{Model and Preliminaries}
\label{sec:model}
There are $n$ bidders that participate in $m$ simultaneous mechanisms. Each mechanism $j \in [m]$ is a pair $M_j = (f_j, p_j)$, consisting of an outcome function and payment functions. More formally, function $f_j\colon B_j \to \calX_j$ maps every bid vector $b_{\cdot,j}$ on mechanism $j$ into an outcome space $\calX_j$. The function $p_j = (p_{1, j},\ldots,p_{n, j})$ defines a payment for each bidder. That is, depending on the bid vector, $p_{i, j} \colon B_j \to \RR_{\geq 0}$ defines the non-negative payment for bidder $i$ in mechanism $j$.

We consider a repeated framework with oblivious learning in a simultaneous composition of mechanisms with availabilities. There are $T$ rounds and in each round the bidders participate in $m$ simultaneous mechanisms. In round $t=1,\ldots,T$, each bidder places a bid $b_{i,j}^t$ for each mechanism, the mechanism determines the outcome and the payments, and bidder $i$ has a utility function $u_i(b^t) = v_i(f(b^t)) - p_i(b^t)$, where $v_i$ is a valuation function over vectors of outcomes and $p_i = \sum_{j} p_{i,j}(b^t)$. In addition, in each round we assume that each mechanism is available to each bidder with a certain probability. We let the Bernoulli random variable $A_{i,j} = 1$ if mechanism $j$ is available to bidder $i$. Due to availability, the mechanisms must also be applicable when only subsets of bidders are placing bids. For this reason, it will be convenient to assume that the outcome space for mechanism $j \in [m]$ is $\calX_j = \calX_{1,j} \times \ldots \times \calX_{n,j}$ and $x_j \in \calX_j$ is $x_j = (x_{i,j})_{i \in [n]}$. We assume that each bidder, for whom the mechanism is not available, must place a bid of ``0''. If bidder $i$ bids 0 for mechanism $j$, we assume $f_j(0,b_{-i,j}) = \bot_{i,j}$, where $\bot_{i,j}$ is a ``losing'' outcome, and payment $p_{i,j}(0,b_{-i,j}) = 0$. For convenience, we will denote by $f = (f_j)_{j \in [m]}$ the composed mechanism and by $\mathcal{X} = \mathcal{X}_1 \times \ldots \times \mathcal{X}_m$ its outcome space.

\paragraph{Oblivious Learning}
We assume \emph{oblivious learning} -- each bidder runs a single no-regret learning algorithm and uses the utility of every round as feedback, no matter how the availability in each round turns out. In hindsight, the average history of play for oblivious learning becomes an availability-oblivious variant of coarse-correlated equilibrium~\cite{BlumChapter07}. Hence, the outcomes of oblivious learning are captured by the coarse-correlated equilibria in the following one-shot game: First, all bidders simultaneously place a bid for every mechanism. They know only the probability distribution of the availabilities. Only after they placed their bids, the availability of each mechanism for each bidder is determined at random. 

\begin{definition}
\label{def:obliviousCCE}
An \emph{availability-oblivious coarse-correlated equilibrium} is a distribution over bid vectors $b$ (independent of $A$) such that, in expectation over all availabilities, it is not beneficial for any bidder $i$ to switch to another bid $b_i'$. For each $i$ and each $b_i'$, we have $\E\left[u_i(b_i', b_{-i})\right] \leq \E\left[u_i(b)\right]$.
\end{definition}

Indeed, our results also hold for a larger class of equilibria, in which a subset of bidders might not be oblivious to availabilities. For our guarantees, it is enough to consider distributions over bidding strategies $b$ which might dependent on $A$ such that, in expectation over all availabilities, it is not beneficial for any bidder $i$ to switch to another bid $b_i'$. For each $i$ and each $b_i'$, we have $\E\left[u_i(b_i', b_{-i})\right] \leq \E\left[u_i(b)\right]$. Note that both ordinary coarse-correlated equilibria and availability-oblivious ones fulfill this property.

We bound the performance of these equilibria by deriving suitable smoothness bounds.

\paragraph{Smoothness}
We assume that each mechanism $j$ satisfies \emph{weak smoothness} as defined in~\cite{SyrgkanisT13}. For any valuations $v_{i, j}\colon \calX_j \to \RR^{\ge 0}$ there are (possibly randomized) deviations\footnote{In slight contrast to~\cite{SyrgkanisT13}, we here assume that the smoothness deviations of a bidder do not depend on his own current bid. This serves to simplify our exposition and can be incorporated into our analysis.} $b'_{i, j}$ for each $i \in [n]$ such that for all bid vectors $b_{\cdot, j}$
\begin{multline}
	\label{eq:smoothness}
	\E\left[\sum_{i \in [n]} v_{i, j}(f_j(b'_{i, j}, b_{-i,j})) - p_{i, j}(b'_{i, j}, b_{-i, j}) \right] \\ \ge \lambda \cdot \max_{x_j \in \calX_j} \sum_{i \in [n]} v_{i, j}(x_j) - \mu_1 \cdot \sum_{i \in [n]} p_{i,j}(b_{\cdot,j}) - \mu_2 \sum_{i \in [n]} h_{i, j}(b_{i, j}, f_j(b_{\cdot,j})) \enspace,
\end{multline}
where $h_{i, j}(b_{i, j}, x_j) = \max_{b_{-i, j}: f_j(b_{\cdot,j}) = x_j} p_{i, j}(b_{\cdot,j})$. 
For intuition, assume that \eqref{eq:smoothness} holds with $\mu_2 = 0$. Consider a learning outcome with a no-regret guarantee where every bidder $i$ can gain at most $\epsilon$ in any fixed deviation, i.e., $\E[v_{i,j}(f_j(b_{\cdot,j})) - p_{i,j}(b_{\cdot,j})] \ge \E\left[v_{i,j}(f_j(b'_{i,j},b_{-i,j})) - p_{i,j}(b'_{i,j}, b_{-i,j})\right] - \epsilon\enspace$. Applying~\eqref{eq:smoothness} pointwise
\begin{equation*}
\sum_{i \in [n]} \E[v_{i,j}(f_j(b_{\cdot,j})) - p_{i,j}(b_{\cdot,j})] \ge \lambda \cdot \max_{x_j \in \calX_j} \sum_{i \in [n]} v_{i, j}(x_j) - \mu_1 \cdot \sum_{i \in [n]} \E[p_{i,j}(b_{\cdot,j})] - n\epsilon,
\end{equation*}
which implies for social welfare
\begin{equation*}
 \sum_{i \in [n]} \E\left[v_{i,j}(f_j(b_{\cdot,j}))\right] \quad \ge \quad \lambda \cdot \max_{x_j \in \calX_j} \sum_{i \in [n]} v_{i, j}(x_j) + (1-\mu_1) \cdot \sum_{i \in [n]} \E[p_{i,j}(b_{\cdot,j})] - n\epsilon\enspace.
\end{equation*}
Every bidder $i$ can stay away from the market and payments are non-negative, so $0 \le \E[p_{i,j}(b_{\cdot,j})] \le \E[v_{i,j}(f_j(b_{\cdot,j}))]+\epsilon$ and 
\[
\max(1,\mu_1) \sum_{i \in [n]} \E\left[v_{i,j}(f_j(b_{\cdot,j}))\right] \quad \ge \quad \lambda \cdot \max_{x_j \in \calX_j} \sum_{i \in [n]} v_{i, j}(x_j) - (n+ \mu_1)\epsilon\enspace.
\]
Thus, for $\epsilon \to 0$, the price of anarchy tends to $\max(1,\mu_1)/\lambda$. More generally, \eqref{eq:smoothness} implies a bound on the price of anarchy of $(\mu_2 + \max(1,\mu_1))/\lambda$ for many equilibrium concepts. If $\mu_2 > 0$, then the bound relies on an additional no-overbidding assumption, which directly transfers to our results. For details see~\cite{SyrgkanisT13}. 

\paragraph{Valuation Functions}
Our main results apply for the class of monotone lattice-submodular valuations.
%
%\begin{definition}[\cite{SyrgkanisT13}]
Suppose for every mechanism $j$ the set $\calX_{ij}$ of possible outcomes for bidder $i$ forms a lattice $(\calX_{ij}, \succeq_{ij})$ with a partial order $\succeq_{ij}$. Bidder $i$ has a \emph{lattice-submodular valuation} $v_i$ if and only if it is submodular on the product lattice $(\calX_i, \succeq_i)$ of outcomes for bidder $i$: $\forall x_i, \tilde{x}_i \in \calX_i : v_i(x_i \vee \tilde{x}_i) + v_i(x_i \wedge \tilde{x}_i) \le v_i(x_i) + v_i(\tilde{x}_i)$. In the paper, we concentrate on distributive lattices, for which this definition is equivalent to the diminishing marginal returns property: 
\[ \forall z_i \succeq_i y_i \in \calX_i \Longrightarrow \forall t \in \calX_i : v_i(t \vee y_i) - v(y_i) \ge v_i(t \vee z_i) - v(z_i). \]
%\end{definition}
%
Lattice-submodular functions generalize submodular set functions but are a strict subclass of XOS functions. 
%
%\begin{definition}
Bidder $i$ has an \emph{XOS valuation} $v_i$ if and only if there are additive functions $v_i^1, v_i^2, \ldots$ with $v_i^{k_i}(x_i) = \sum_{j} v_{ij}^{k_i}(x_{ij})$ for every $x_{i, j} \in \calX_{i, j}$ and $v_i(x_i) = \max_{k_i} v_i^{k_i}(x_i)$. 
%\end{definition}
%
%\input{relatedwork}
% !TEX root = main.tex

\section{Composition with Independent Admission}
\label{sec:independent}

%\subsection{Smoothness for Lattice-Submodular Valuations}
We first consider simultaneous composition of smooth mechanisms with independent availabilities. Here, all random variables $A_{i,j}$ are independent, and we let $q_{i,j} = \Pr{A_{i,j} = 1}$.

\begin{definition}\label{def:correlation-gap}
Let $v$ be a valuation function on a product lattice, coming from a class of valuation functions $\mathcal{V}$. Given vectors $x^1, \ldots, x^k$ and numbers $\alpha_1, \ldots, \alpha_k \in [0,1]$ such that $\sum_{j = 1}^k \alpha_j = 1$, determine another vector $y$ at random by setting component $y_i$ to $x^j_i$ independently with probability $\alpha_j$. Then, the smallest $\gamma$ s.t. $\sum_{j=1}^k \alpha_j v(x^j) \le \gamma\cdot \E\left[v(y)\right]$ is the \emph{correlation gap} of class $\mathcal{V}$.
\end{definition}

\begin{theorem}\label{thm:independent}
Suppose bidder valuations are monotone and come from a class $\mathcal{V}$ with a correlation gap of $\gamma(\mathcal{V})$.	The price of anarchy for oblivious learning for simultaneous composition of weakly $(\lambda,\mu_1,\mu_2)$-smooth mechanisms with valuations from $\mathcal{V}$ and fully independent availability is at most $\gamma(\mathcal{V})\cdot(\mu_2 + \max(1,\mu_1))/\lambda$.
\end{theorem}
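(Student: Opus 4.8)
The plan is to reduce the statement to the known smoothness machinery of Syrgkanis--Tardos by exhibiting, for each bidder, a single availability-oblivious deviating bid distribution that achieves a $1/\gamma(\mathcal{V})$ fraction of the value that the type-dependent (Bayesian) deviations would achieve, and then to invoke the standard smoothness-to-price-of-anarchy argument sketched in Section~\ref{sec:model}. Concretely, fix an optimal allocation $x^*$ for the composed mechanism, and for each bidder $i$ let $x^*_i = (x^*_{i,j})_{j\in[m]}$ be the bundle of outcomes assigned to $i$. For a fixed availability realization $A$, the weak smoothness of mechanism $j$ (applied componentwise, as in the intuition after \eqref{eq:smoothness}) suggests a deviation for bidder $i$ on mechanism $j$ that targets outcome $x^*_{i,j}$ whenever $j$ is available to $i$ and bids $0$ otherwise; call the resulting ``ideal'' target bundle $x^A_i$, where $x^A_{i,j} = x^*_{i,j}$ if $A_{i,j}=1$ and $x^A_{i,j}=\bot_{i,j}$ otherwise. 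Since the bidder is oblivious, he cannot condition on $A$; instead he commits in advance to a fixed \emph{distribution} $D_i$ over target bundles, and the natural choice is to let $D_i$ be the product distribution that sets coordinate $j$ to the bid targeting $x^*_{i,j}$ independently with probability $q_{i,j}$ and to $0$ with probability $1-q_{i,j}$. This is exactly the independent-rounding distribution from Definition~\ref{def:correlation-gap}, with the $x^A_i$'s (over the $2^m$ possible availability patterns for $i$, weighted by $\prod_j q_{i,j}^{A_{i,j}}(1-q_{i,j})^{1-A_{i,j}}$) playing the role of the vectors $x^1,\dots,x^k$.

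The first key step is bookkeeping: apply the per-mechanism weak smoothness inequality \eqref{eq:smoothness} with the valuations $v_{i,j}$ induced by the XOS/additive representation of $v_i$ (so that $\sum_j v_{i,j}(x_{i,j})$ lower-bounds $v_i$ on the relevant bundle and equals it on $x^*_i$), sum over $j\in[m]$ and over $i\in[n]$, and take expectations over the (oblivious) bid distribution, the deviation randomness, and the availabilities. The left-hand side becomes the expected sum of valuations-minus-payments under the oblivious deviation distribution $D_i$; by the definition of an availability-oblivious CCE (Definition~\ref{def:obliviousCCE}, or its no-regret relaxation) this is at most the expected utility at equilibrium plus the vanishing regret term, so it is at most $\sum_i \E[v_i(f(b))]$ up to the payments, exactly as in the $\mu_2=0$ derivation in Section~\ref{sec:model} and its $\mu_2>0$ refinement via the no-overbidding assumption and the $h_{i,j}$ terms. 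The second key step — the genuinely new ingredient — is to lower-bound the right-hand side: the ``$\lambda\cdot\max_x$'' term, when the deviation targets $x^*_{i,j}$ only on available coordinates, produces $\lambda\cdot\E_A\big[\sum_i \sum_j v_{i,j}(x^A_{i,j})\big] = \lambda \cdot \E_A\big[\sum_i v_i^{*}(x^A_i)\big]$ where $v_i^*$ is the additive function from the XOS representation that is tight at $x^*_i$; by monotonicity $v_i^*(x^A_i) \le v_i(x^A_i)$ need not hold in the right direction, so instead we use that $\E_A[v_i^*(x^A_i)]$ is the expectation of an \emph{additive} function under the product distribution, which equals $\sum_j q_{i,j}\, v_{i,j}^*(x^*_{i,j})$, and this is precisely $\E[v_i(y_i)]$-type quantity for the random bundle $y_i \sim D_i$ — at which point Definition~\ref{def:correlation-gap} gives $\E_{y_i\sim D_i}[v_i(y_i)] \ge \frac{1}{\gamma(\mathcal{V})}\, v_i(x^*_i)$, so the right-hand side is at least $\frac{\lambda}{\gamma(\mathcal{V})}\OPT$ minus the usual $\mu_1,\mu_2$ payment/overbidding terms.

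Combining the two bounds and rearranging in exactly the manner of the displayed computation in Section~\ref{sec:model} — using $0\le \E[p_{i,j}]\le \E[v_{i,j}(f(b))]+\epsilon$ for the $\mu_1$ term and the no-overbidding assumption to absorb the $\mu_2\sum_i h_{i,j}$ term — yields
\[
(\mu_2+\max(1,\mu_1))\sum_{i\in[n]}\E\big[v_i(f(b))\big]\ \ge\ \frac{\lambda}{\gamma(\mathcal{V})}\cdot \OPT\ -\ (n+\mu_1)\epsilon,
\]
and letting $\epsilon\to 0$ gives the claimed price-of-anarchy bound $\gamma(\mathcal{V})\cdot(\mu_2+\max(1,\mu_1))/\lambda$. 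I expect the main obstacle to be making the second step rigorous: one must be careful that the XOS/additive surrogate $v_i^*$ chosen to be tight at the \emph{optimal} bundle $x^*_i$ is the correct object to feed into the correlation-gap inequality, and that the independent-rounding distribution in Definition~\ref{def:correlation-gap} is applied to the right family of vectors (the availability-masked copies of $x^*_i$) with the right marginals $q_{i,j}$ on each coordinate. A secondary subtlety is handling the $\mu_2$-term and the per-coordinate ``losing outcome'' $\bot_{i,j}$ cleanly when coordinates are masked out, which is where the assumption $f_j(0,b_{-i,j})=\bot_{i,j}$ and $p_{i,j}(0,b_{-i,j})=0$ from the model section does the work; monotonicity of $v_i$ ensures that masking a coordinate to $\bot_{i,j}$ only decreases value, so no spurious loss is introduced.
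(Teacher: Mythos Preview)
The proposal has a genuine gap at the correlation-gap step. You claim that Definition~\ref{def:correlation-gap} gives $\E_{y_i\sim D_i}[v_i(y_i)]\ge\frac{1}{\gamma(\mathcal V)}\,v_i(x^*_i)$ when $y_i$ is obtained by keeping the $j$-th coordinate of a \emph{fixed} $x^*_i$ with probability $q_{i,j}$ and setting it to $\bot$ otherwise. That is not what correlation gap says, and the inequality is false: correlation gap compares a correlated distribution over vectors to the product distribution with the \emph{same} marginals, not a deterministic vector to a downsampled copy of itself; letting $q_{i,j}\to 0$ drives the left side to $0$ while the right stays positive. A concrete failure of the whole scheme: take one bidder with a unit-demand valuation over $m$ items (each worth $1$) and $q_{i,j}=q$ small. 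Any fixed $x^*_i$ is a single item, and your tight additive $v_i^*$ puts all mass there, so your ``$\lambda\cdot\max$'' term yields $\lambda q$. But $\E_A[\OPT_A]=1-(1-q)^m\approx \lambda m q$, so no constant factor $\gamma$ can close the gap. The root cause is that you freeze $x^*$ and the XOS representative $v_i^*$, whereas the optimum must adapt to availability.

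The paper's argument places the correlation gap elsewhere. It starts from the type-dependent Bayesian smoothness deviation $b_i^{A_i}$ of \cite{SyrgkanisT13} (which samples the full availability profile, computes the random optimum $x^*$, and picks the XOS index $\ell$ tight at $x^*_i$), and then \emph{constructs} the oblivious deviation $b_i'$ by drawing each coordinate $b_{i,j}'$ from a fresh, independent copy of that same process (assuming $A_{i,j}=1$ for the coordinate in question). By construction, for fixed $A_{-i},b_{-i}$ the per-coordinate marginals of $f_j(b_i',b_{-i})$ and $f_j(b_i^{A_i},b_{-i})$ coincide, but the former is a product across $j$ while the latter is correlated through the common $A_i$ and common $\ell$. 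Correlation gap is applied \emph{to these two outcome distributions}, giving $\E[v_i(f(b_i',b_{-i}))]\ge\frac{1}{\gamma}\,\E[v_i(f(b_i^{A_i},b_{-i}))]$; one then scales valuations to $\bar v_i=v_i/\gamma$, matches payments (additive, same marginals), and invokes Bayesian smoothness on $b_i^{A_i}$ with $\bar v_i$ to get $\frac{\lambda}{\gamma}\,\E[\sum_i v_i(x^*)]$ on the right-hand side. In short: the missing idea is to let the deviation target the \emph{random} optimum and to apply correlation gap between the correlated and decorrelated \emph{outcome} distributions of the two deviations, not between a fixed target bundle and its random masking.
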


Before the proof of the main theorem of this section, we note that in Appendix~\ref{app:correlationGap} we also prove an upper bound of $e/(e-1)$ on the correlation gap of lattice-submodular valuations with diminishing marginal returns. This result slightly generalizes the result of~\cite{AgrawalDis11} from composition of totally ordered sets to arbitrary product lattices.

\begin{lemma}[Correlation Gap on a Product Lattice]
\label{lemma:gapLatticeIndep}
Let $v$ be a function with diminishing marginal returns on a product lattice. Given vectors $x^1, \ldots, x^k$ and numbers $\alpha_1, \ldots, \alpha_k \in [0,1]$ such that $\sum_{j = 1}^k \alpha_j = 1$, determine another vector $y$ at random by setting component $y_i$ to $x^j_i$ independently with probability $\alpha_j$. Then $\Ex{v(y)} \geq \left( 1 - \frac{1}{\e} \right) \sum_{j=1}^k \alpha_j v(x^j)$.
\end{lemma}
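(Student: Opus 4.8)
The plan is to prove Lemma~\ref{lemma:gapLatticeIndep} by a coupling argument that mimics the classical proof of the $(1-1/e)$ correlation gap for submodular set functions, but carried out on the product lattice. First I would fix a ground element, say the lattice bottom $\bot$ (or more generally exploit monotonicity so that values only increase), and set up the random vector $y$ as a union of independently chosen ``blocks'': component $y_i$ equals $x^j_i$ with probability $\alpha_j$. The key reformulation is to think of $y$ as being built from the empty configuration by, in some fixed order $j=1,\dots,k$, "activating" block $j$ with probability $\alpha_j$, where activating block $j$ means taking the join with $x^j$ on the coordinates where $j$ was selected. Because the $y_i$ are chosen independently, one can equivalently describe $y$ as $\bigvee_{j\in S} (x^j \text{ restricted appropriately})$ for a suitable random set $S$, but the cleaner route is the coordinatewise one: reveal the selections coordinate by coordinate and track the expected increment.

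The main technical step is a telescoping/marginal-gain estimate. I would write $\Ex{v(y)}$ as a sum of expected marginal contributions obtained by revealing the blocks one at a time, and lower-bound the contribution of block $j$ using the diminishing-marginal-returns property stated in the excerpt: since the current partial configuration is always $\preceq_i$ the full join $\bigvee_{\ell} x^\ell_i$ on each coordinate, the marginal gain from setting a coordinate to $x^j_i$ is at least as large as it would be on top of the maximal configuration. Carefully, one shows that the probability that block $j$ is the ``last/deciding'' contributor to a given coordinate, combined with submodularity, yields $\Ex{v(y)} \ge \bigl(1-\prod_{\ell}(1-\alpha_\ell\cdot[\text{indicator-ish}])\bigr)$-type factors; the standard AM--GM / convexity bound $1-\prod_j(1-\alpha_j) \ge 1-1/e$ when $\sum_j\alpha_j=1$ then delivers the constant. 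Concretely I expect the argument to reduce, after conditioning, to the one-coordinate-at-a-time inequality $\Ex{v(y)} \ge \sum_{j=1}^k \alpha_j\, \Ex{\text{gain of } x^j \text{ over the join of a random subset of the others}}$, and then to bound each such gain from below by $(1-1/e)\,v(x^j)$ minus lower-order terms that telescope away, or alternatively to run the continuous-greedy-style argument where $g(\theta)=\Ex{v(y_\theta)}$ for $y_\theta$ including each block independently with the scaled probabilities satisfies $g'(\theta)\ge \sum_j \alpha_j v(x^j) - g(\theta)$, whence $g(1)\ge (1-1/e)\sum_j\alpha_j v(x^j)$.

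The hard part will be making the lattice version of the marginal-gain comparison rigorous: in the set-function case one uses $f(A\cup\{e\})-f(A)\ge f(B\cup\{e\})-f(B)$ for $A\subseteq B$, and here the analogue is exactly the diminishing-returns inequality $v(t\vee y_i)-v(y_i)\ge v(t\vee z_i)-v(z_i)$ for $z_i\succeq_i y_i$, but one must check that every partial configuration arising in the block-revelation process really is coordinatewise below the relevant reference configuration, and that the join operations compose correctly across coordinates (this is where distributivity of the lattice is used, to commute joins of independent coordinate choices). I would isolate this as a short claim: \emph{for any order of revealing the independent coordinate choices, the resulting intermediate vectors form a $\preceq_i$-increasing chain in each coordinate}, which is immediate since each step only replaces a coordinate by a join. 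Once that claim is in place, the rest is the standard $(1-1/e)$ bookkeeping, and I would present it via the continuous interpolation $g(\theta)$ since that keeps the calculation to two or three lines and avoids an explicit ordering of the blocks.

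\begin{proof}[Proof sketch of Lemma~\ref{lemma:gapLatticeIndep}]
For $\theta\in[0,1]$ let $y_\theta$ be the random vector obtained by, for each coordinate $i$ independently, setting $y_{\theta,i}=x^j_i$ with probability $\theta\alpha_j$ and leaving $y_{\theta,i}$ at the lattice bottom $\bot_i$ otherwise, and put $g(\theta)=\Ex{v(y_\theta)}$, so that $g(0)=v(\bot)$ (which by monotonicity we may take to be $0$) and $g(1)=\Ex{v(y)}$. Differentiating and using that the coordinate choices are independent, $g'(\theta)=\sum_j \alpha_j\,\Ex{v(y_\theta\vee x^j)-v(y_\theta)}$, where $y_\theta\vee x^j$ denotes the coordinatewise join. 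Since in each coordinate $y_{\theta,i}\preceq_i \bigvee_\ell x^\ell_i$, the diminishing marginal returns property applied coordinate by coordinate (here distributivity lets us treat the coordinates one at a time) gives $v(y_\theta\vee x^j)-v(y_\theta)\ge v(y_\theta\vee x^j)-v(y_\theta)\ge v(x^j)-v(y_\theta)$ in expectation after summing the per-coordinate comparisons, hence $g'(\theta)\ge \sum_j\alpha_j v(x^j)-g(\theta)$. Solving this differential inequality with $g(0)=0$ yields $g(1)\ge(1-\tfrac1\e)\sum_j\alpha_j v(x^j)$, which is the claim.
\end{proof}
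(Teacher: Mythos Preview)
Your continuous-interpolation approach is genuinely different from the paper's. The paper orders the vectors so that $v(x^1)\ge\cdots\ge v(x^k)$ and proves a discrete recursion $\Ex{v(y)}\ge(1-\alpha_1)\Ex{v(z)}+\alpha_1 v(x^1)$, where $z$ is $y$ with every coordinate that selected index~$1$ reset to~$\bot$; iterating and then applying a Chebyshev/FKG correlation inequality yields the $(1-1/\e)$ factor. Your ODE route avoids both the ordering and the FKG step, and would give a shorter proof if the derivative step were correct.

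That step, however, has a genuine gap. For the process you define (each coordinate independently equals $x^j_i$ with probability $\theta\alpha_j$ and $\bot_i$ otherwise), the identity $g'(\theta)=\sum_j\alpha_j\,\Ex{v(y_\theta\vee x^j)-v(y_\theta)}$ is \emph{false}: differentiating coordinate by coordinate gives instead
\[
g'(\theta)\;=\;\sum_j\alpha_j\sum_{i}\Ex{\,v\bigl(y_\theta^{-i},x^j_i\bigr)-v\bigl(y_\theta^{-i},\bot_i\bigr)\,},
\]
a sum of single-coordinate increments, not the gain from joining with the full block $x^j$. (Already for additive $v$ and $k=1$ the two disagree: the true derivative is the constant $v(x^1)-v(\bot)$, while your expression equals $(1-\theta)\bigl(v(x^1)-v(\bot)\bigr)$.) Your formula \emph{would} be the derivative of the different process $\tilde y_\theta=\bigvee_{j\in S_\theta}x^j$ with $j\in S_\theta$ independently with probability $\theta\alpha_j$, but then $\tilde y_1$ is not distributed as $y$ and the argument does not conclude. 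The fix is to keep your $y_\theta$, compute the correct $g'(\theta)$ above, and then prove the \emph{inequality} $g'(\theta)\ge\sum_j\alpha_j\,\Ex{v(y_\theta\vee x^j)-v(y_\theta)}$: telescope $v(y_\theta\vee x^j)-v(y_\theta)$ one coordinate at a time and apply diminishing returns twice per coordinate to bound each telescoping term by $v(y_\theta^{-i},x^j_i)-v(y_\theta^{-i},\bot_i)$. Combined with $v(y_\theta\vee x^j)\ge v(x^j)$ (monotonicity, which is in fact implied by the paper's DMR condition), this yields $g'(\theta)\ge\sum_j\alpha_j v(x^j)-g(\theta)$ and the proof goes through. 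The line in your sketch ``gives $v(y_\theta\vee x^j)-v(y_\theta)\ge v(y_\theta\vee x^j)-v(y_\theta)\ge v(x^j)-v(y_\theta)$'' is both a typo and hides exactly this nontrivial comparison.
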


\noindent From here, we arrive at the following corrolary of the main theorem.

% Note that the proof uses lattice-submodular valuations only as a prerequisite for a bounded correlation gap. Hence, more generally, we can use the same proof for every class of valuation functions by replacing $(1-1/\e)$ with a suitable upper bound on the correlation gap.

\begin{corollary}
The price of anarchy for oblivious learning for simultaneous comp\-osition of weakly $(\lambda,\mu_1,\mu_2)$-smooth mechanisms with monotone lattice-submodular valuations and fully independent availability is at most $e/(e-1)\cdot(\mu_2 + \max(1,\mu_1))/\lambda$.
\end{corollary}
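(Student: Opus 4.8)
The plan is to read off the statement from Theorem~\ref{thm:independent} with $\mathcal{V}$ taken to be the class of monotone lattice-submodular valuations, once we have bounded the correlation gap $\gamma(\mathcal{V})$ of this class by $e/(e-1)$. The latter bound is exactly the content of Lemma~\ref{lemma:gapLatticeIndep}, so the corollary requires no new argument beyond chaining the two results.

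First I would recall that, as noted in Section~\ref{sec:model}, we work throughout with distributive product lattices $(\calX_i,\succeq_i)$, and on such lattices the lattice-submodularity defining $\mathcal{V}$ is equivalent to the diminishing-marginal-returns property. Hence every $v_i\in\mathcal{V}$ is a function with diminishing marginal returns on a product lattice, which is precisely the hypothesis of Lemma~\ref{lemma:gapLatticeIndep}.

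Next I would match Lemma~\ref{lemma:gapLatticeIndep} with Definition~\ref{def:correlation-gap}. For any vectors $x^1,\dots,x^k$ and weights $\alpha_1,\dots,\alpha_k\in[0,1]$ with $\sum_j\alpha_j=1$, the randomized vector $y$ obtained by independently setting $y_i=x^j_i$ with probability $\alpha_j$ is the same object in both statements. Lemma~\ref{lemma:gapLatticeIndep} gives $\Ex{v(y)}\ge(1-\tfrac{1}{\e})\sum_j\alpha_j v(x^j)$, i.e.\ $\sum_j\alpha_j v(x^j)\le\tfrac{\e}{\e-1}\Ex{v(y)}$; since this holds for all such $x^j,\alpha_j$, the smallest valid constant in Definition~\ref{def:correlation-gap} satisfies $\gamma(\mathcal{V})\le \e/(\e-1)$.

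Finally I would substitute $\gamma(\mathcal{V})\le\e/(\e-1)$ into the bound $\gamma(\mathcal{V})\cdot(\mu_2+\max(1,\mu_1))/\lambda$ from Theorem~\ref{thm:independent}; the corollary's hypotheses (monotone valuations, weak $(\lambda,\mu_1,\mu_2)$-smoothness of each mechanism, fully independent availability) are exactly those of that theorem, so the conclusion $\tfrac{\e}{\e-1}\cdot(\mu_2+\max(1,\mu_1))/\lambda$ follows. There is no real obstacle here: all the substance lives in Theorem~\ref{thm:independent} and Lemma~\ref{lemma:gapLatticeIndep}, both of which we may assume; the one point deserving a line of justification is the equivalence, valid on distributive lattices, between the submodularity used to specify $\mathcal{V}$ and the diminishing-marginal-returns hypothesis under which the correlation-gap bound was proved.
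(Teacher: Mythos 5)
Your proposal is correct and matches the paper's (implicit) argument exactly: the corollary is obtained by instantiating Theorem~\ref{thm:independent} with $\mathcal{V}$ the class of monotone lattice-submodular valuations and bounding $\gamma(\mathcal{V})\le e/(e-1)$ via Lemma~\ref{lemma:gapLatticeIndep}, using the equivalence (on distributive lattices) of lattice-submodularity and diminishing marginal returns noted in Section~\ref{sec:model}. Nothing further is needed.
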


\begin{myproof}
We will prove the theorem by defining an availability-oblivious (randomized) deviation $b_i'$ for each player $i$ such that the following inequality will hold for any (not necessarily availability-oblivious) bidding strategy $b$:
\begin{align}
&\sum_i \E\left[ u_i(b_i', b_{-i}) \right] \nonumber\\
&\geq \quad \frac{1}{\gamma(\mathcal{V})} \cdot \lambda \cdot \sum_i \E\left[ v_i(x^*) \right] - \mu_1 \sum_i \E\left[ p_i(b) \right] - \mu_2 \sum_i \E\left[ h_i(b_i, f(b))\right]\enspace,
\label{eq:independent_smoothnes}
\end{align}
where $x^*$ denotes the (random) optimal outcome. From this inequality, whose form is in fact exactly that of the smoothness condition (\ref{eq:smoothness}), the claim of the theorem follows as described in Section~\ref{sec:model}.

In more detail, to attain the aforementioned inequality, we will relate each player's utility for deviating to $b_i'$ to the utility he could achieve if he was allowed to see and react upon the availabilities. In that case, he could simply use the smoothness deviation tailored to the specific availability profile $A_i = (A_{i,1},\ldots,A_{i,m})$ that he is encountering. We denote this non-oblivious smoothness deviation by $b_i^{A_i}$. Because the global mechanism is a simultaneous composition of $(\lambda,\mu_1,\mu_2)$-smooth mechanisms, it is again $(\lambda,\mu_1,\mu_2)$-smooth. Therefore we know that the non-oblivious deviations $b_i^{A_i}$ do exist, and they satisfy the smoothness inequality (\ref{eq:smoothness}) by definition. 

We proceed to define, for each player $i$, the availability-oblivious deviation $b_i'$. First, bidder $i$ assumes for himself a reduced valuation function $\bar{v}_i = \alpha \cdot v_i$, for some appropriate $\alpha$ to be chosen later. The deviation $b_i'$ is a composition of component-wise independent deviations $b_{i, j}'$, i.e. $b_i'=(b_{i, 1}', \dots, b_{i, m}')$ where each $b_{i, j}'$ is chosen independently. To arrive at $b_{i, j}'$, bidder $i$ assumes that mechanism $j$ is available to him and draws all other availabilities independently according to probabilities $q_{i', j'}$. This means that he draws availabilities for all other players on all mechanisms and also his own availabilities on all mechanisms other than $j$. Now he has a full availability profile, and therefore he can consider the non-oblivious smoothness deviation. He observes the $j$-th component of this smoothness deviation and sets $b_{i, j}'$ to be equal to it. Note that $b_{i, j}'$ will be applied only with the probability that mechanism $j$ is in fact available to bidder $i$, i.e. with probability $q_{i, j}$.

Next, we want to compare $u_i(b_i', b_{-i})$ and $u_i(b_i^{A_i}, b_{-i})$. Let us focus on the valuation $v_i(f(b_i', b_{-i}))$ first. The non-oblivious smoothness deviation $b_i^{A_i}$ is a vector whose components are correlated. More precisely, to form this bid we observe $A_i$, sample the availabilities $A_{-i}$ and bids $b_{-i}$ of other players, and take the optimal allocation $x^*$ for the resulting availability profile $A$. Then, we determine the $\ell$ for which $\bar{v}_i(x^*_i)=\sum_j \bar{v}_{i, j}^{\ell}(x^*_{i, j})$ and use $\bar{v}_{i, j}^{\ell}$ for determining $b_{i, j}^{A_i}$ (note that $A_i$ can be regarded as bidder $i$'s type in a Bayesian sense, for more details see \cite{SyrgkanisT13}). Therefore, the components of $b_i^{A_i}$ are correlated through the common choice of $\ell$. Our deviation $b_i'$ is assembled by setting $b_{i, j}' = (b_{i, j}^{A_i})_{k_j}$ independently for each $j$.

Formally, let $r_{i, j}^{\ell}$ denote the conditional probability that the optimum yields an outcome vector $x^*$ that attains its maximum value for bidder $i$ in $\bar{v}_i^{\ell}$, given that $A_{i,j} = 1$. Then, the marginal probability of observing $b_{i, j}^{A_i} = (b_{i, j}^{A_i})_{\ell}$ is $r_{i,j}^{\ell}q_{i,j}$. In $b_i'$ we pick $\ell$ independently for each mechanism with probability $r_{i,j}^{\ell}$, which yields a combined probability of $r_{i,j}^{\ell}q_{i,j}$ for availability and deviation. Thus, $b_i'$ simulates the marginal probabilities of outcomes in $b_i^{A_i}$, i.e., $\Pr{f_j(b'_i,b_{-i}) = y_{i,j} \mid A_{-i}, b_{-i}} = \Pr{f_j(b^{A_i}_i,b_{-i}) = y_{i,j} \mid A_{-i}, b_{-i}}$ for all $y_{i,j} \in \calX_{i,j}$, for each $j \in [m]$. Hence, for fixed $A_{-i}, b_{-i}$, the two expected valuations $\E\left[ v_i(f(b'_i,b_{-i})) \mid A_{-i}, b_{-i} \right]$ and $\E\left[ v_i(f(b_i^{A_i},b_{-i})) \mid A_{-i}, b_{-i} \right]$ are related via correlation gap.

\noindent Thus, setting $\alpha = 1/\gamma(\mathcal{V})$ and $\bar{v}_i(x) = 1/\gamma(\mathcal{V})\cdot v_i(x)$ we get
\begin{align*}
\E\left[ v_i(f(b'_i,b_{-i})) \mid A_{-i}, b_{-i} \right] 
& = \quad \sum_{y \in \mathcal{X}} v_i(y) \cdot \Pr{f(b_i',b_{-i}) = y \mid A_{-i}, b_{-i}}\\
& = \quad \sum_{y \in \mathcal{X}} v_i(y) \cdot \prod_{j} \Pr{f_j(b_i',b_{-i}) = y_{i,j} \mid A_{-i}, b_{-i}}\\
&\ge \quad \frac{1}{\gamma(\mathcal{V})}\cdot \sum_{y \in \mathcal{X}} v_i(y) \cdot \Pr{f(b_i^{A_i},b_{-i}) = y \mid A_{-i}, b_{-i}}\\
&= \quad \frac{1}{\gamma(\mathcal{V})} \cdot \E\left[ v_i(f(b_i^{A_i},b_{-i})) \mid A_{-i}, b_{-i} \right] \\
& = \quad \E\left[ \bar{v}_i(f(b_i^{A_i},b_{-i})) \mid A_{-i}, b_{-i} \right] \enspace.
\end{align*}
In addition, because payments are simply additive across mechanisms, it is straightforward to see that for every bidder $i$
\[
\E\left[ p_{i}(b'_i,b_{-i}) \mid A_{-i},b_{-i} \right] = \E\left[ p_{i}(b_i^{A_i},b_{-i}) \mid A_{-i},b_{-i} \right].
\]
This allows to apply the smoothness bound for Bayesian mechanisms with independent types from~\cite{SyrgkanisT13} to derive
\begin{align*}
\nonumber &\sum_{i} \E\left[ u_i(b'_i,b_{-i}) \right] \\
& = \quad \sum_{i} \E\left[v_i(f(b'_i,b_{-i}))\right] - \E\left[p_i(b'_i,b_{-i}) \right] \\
& \ge \quad \sum_{i} \E\left[ \bar{v}_i(f(b_i^{A_i},b_{-i}))\right] - \E\left[ p_i(b_i^{A_i},b_{-i}) \right] \\
& \ge \quad \lambda\cdot \sum_i \E\left[\bar{v}_i(x^*)\right] - \mu_1 \sum_i \E\left[ p_i(b)\right] - \mu_2 \sum_i \E\left[ h_i(b_i,f(b))\right]\\
& = \quad \frac{\lambda}{\gamma(\mathcal{V})}\cdot \sum_i \E\left[v_i(x^*)\right] - \mu_1 \sum_i \E\left[ p_i(b)\right] - \mu_2 \sum_i \E\left[ h_i(b_i,f(b))\right]
\end{align*}
This proves the desired smoothness guarantee and implies the theorem. \qed
\end{myproof}
%\input{setSubmodularLower}
% !TEX root = main.tex

\section{Composition with Everybody-or-Nobody Admission}

\label{sec:correlated}

We consider the case in which at each point in time each mechanism is either available to all bidders or to none. We let $A_j = A_{i,j}$ for all $i \in [n]$ and $q_j = \Pr{A_j = 1}$. Note that all $A_j$ are assumed to be independent.

Let the social optimum be denoted by $x^*$. We assume that $x_j^* = \bot_j$ if $A_j = 0$. Otherwise, $x^*$ might have different values, depending on the availabilities of other mechanisms. Let us denote the possible outcomes by $x_j^1, x_j^2, \ldots$ and let $r_j^\ell := \Pr{x^*_j = x_j^\ell \growingmid A_j = 1}$. That is, $r_j^\ell$ is the marginal probability of $x_j^\ell$ conditioned on $j$ being available. Theorem \ref{everybody-or-nobody} formulates our main result in this section.

\begin{theorem}\label{everybody-or-nobody} 
  The price of anarchy for oblivious learning for simultaneous comp\-osition of weakly $(\lambda, \mu_1, \mu_2)$-smooth mechanisms with monotone lattice-submodular valuations and everybody-or-nobody admission is at most $4e/(e-1)\cdot(\mu_2 + \max(1,\mu_1))/\lambda^2$.
\end{theorem}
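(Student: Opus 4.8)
The plan is to mirror the structure of the proof of Theorem~\ref{thm:independent}: define an availability-oblivious randomized deviation $b_i'$ for each bidder and establish a smoothness-type inequality of the form~\eqref{eq:smoothness}, but now with $\lambda$ replaced by something like $\lambda^2/(4e/(e-1))$, after which the price-of-anarchy bound follows by the argument in Section~\ref{sec:model}. As announced in the section preamble, the key difference from the independent case is that availabilities are perfectly correlated across bidders, so a bidder cannot reconstruct a plausible full availability profile by sampling other bidders' availabilities independently. Instead, the idea sketched in the introduction (Section~\ref{sec:correlated}) is that each bidder $i$ \emph{simulates independence by drawing random types and outcomes for himself}: he samples, for each mechanism $j$ independently, an outcome index $\ell$ with probability $r_j^\ell$ (the marginal of the optimal outcome on $j$ conditioned on $j$ being available), pretends this is the outcome he is entitled to, feeds the corresponding XOS-clause bid into the weak-smoothness deviation of mechanism $j$, and plays its $j$-th component. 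As in Theorem~\ref{thm:independent}, bidder $i$ also scales down his claimed valuation by a factor $\alpha = (e-1)/e$ (the reciprocal of the lattice-submodular correlation gap from Lemma~\ref{lemma:gapLatticeIndep}), so that the product-distribution outcome he actually induces has value at least $\alpha$ times the value of the correlated optimal outcome.

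The first block of the argument is combinatorial/probabilistic: show that the outcome bidder $i$ obtains from $b_i'$ on mechanism $j$ stochastically dominates (or at least has the right marginals as) an independent product over $j$ of the $r_j^\ell$-distributed optimal outcomes, conditioned on mechanism $j$ actually being available. Here weak smoothness of each single mechanism $M_j$ is invoked with the fictitious additive (XOS-clause) valuation as $v_{i,j}$: because the bid $b_{i,j}'$ was constructed as the $j$-th coordinate of the smoothness deviation for that clause, bidder $i$'s realized value on $j$ is, in expectation over others' bids, at least $\lambda$ times the optimal single-mechanism value of that clause minus the $\mu_1$-payment and $\mu_2$-overbidding terms. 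Summing the clause values over $j$ and using that the chosen clause $\ell$ was the maximizer for $\bar v_i(x^*_i)$ recovers $\bar v_i(x^*_i)$; applying the correlation gap Lemma~\ref{lemma:gapLatticeIndep} to pass from the independent product of per-mechanism optima to the true (correlated across mechanisms) optimum $x^*_i$ contributes the second factor of $e/(e-1)$, and averaging over $A_j$ the availability probabilities $q_j$ cancel against the conditioning in $r_j^\ell$. The extra factor of $4$ and the appearance of $\lambda^2$ rather than $\lambda$ in the final bound strongly suggest a second, nested application of a smoothness/concavity estimate — I expect it arises because the random outcome a bidder fabricates for himself may be worth less than the real optimum by a factor related to $\lambda$, so one loses $\lambda$ once in the single-mechanism smoothness step and a second time when controlling how much the simulated optimum underestimates the true one, with the constant $4$ coming from a worst-case balancing (e.g.\ an AM–GM or $ab \le (a+b)^2/4$ type step) between these two losses.

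The main obstacle, and the technically delicate point flagged in the introduction, is the sentence ``the technical challenge is to show that there is a connection to the value obtained by the bidder'': it must be argued that when bidder $i$ plugs a \emph{simulated} (rather than true) optimal outcome into the smoothness deviation, the value he actually realizes is still comparable to $\lambda\,\bar v_i(x_i^*)$ — the subtlety being that the true optimal outcome on mechanism $j$ in a given round need not equal the $\ell$ that bidder $i$ happened to sample, so the smoothness deviation is ``aimed'' at the wrong target. Handling this requires using monotonicity of $v_i$ together with lattice-submodularity (diminishing marginal returns) to argue that an outcome which is ``at least as much'' as the sampled target is worth at least as much, and then bounding, over the randomness of $A$, how often the sampled and true targets diverge — this is where correlation gap on the product lattice (Lemma~\ref{lemma:gapLatticeIndep}) enters a second time and where the loss of the additional $\lambda$ and the constant $4$ are incurred. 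Once this value-transfer estimate is in hand, the additive-across-mechanisms structure of the payments makes the $\mu_1$ and $\mu_2$ terms go through exactly as in the proof of Theorem~\ref{thm:independent}, and assembling the per-bidder inequalities and summing over $i$ yields the claimed smoothness guarantee and hence the theorem. \qed
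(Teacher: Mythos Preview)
Your plan captures the high-level spirit (each bidder simulates an independent optimum and feeds it into per-mechanism smoothness deviations), but it misses the central construction of the paper's proof and misidentifies where both the factor $4$ and the second power of $\lambda$ come from.

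The paper does \emph{not} reuse the XOS-clause trick from Theorem~\ref{thm:independent} with a valuation scaled by the correlation gap. Instead, each bidder $i$ draws \emph{two} independent random vectors, $z^i$ (with $z^i_j = x_j^\ell$ with probability $r_j^\ell/\alpha$ for a parameter $\alpha$ to be set later) and $\tilde t^i$ (with $\tilde t^i_j = x_j^\ell$ with probability $q_j r_j^\ell$), and from these builds \emph{new fictitious valuation functions} $w^i_{i',j}(y_j) = v_{i'}(\tilde t^i_1,\ldots,\tilde t^i_{j-1}, y_j\wedge z^i_j,\bot,\ldots) - v_{i'}(\tilde t^i_1,\ldots,\tilde t^i_{j-1},\bot,\ldots)$ for \emph{every} bidder $i'$. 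The deviation $b'_{i,j}$ is then the smoothness deviation of mechanism $j$ computed with respect to the profile $(w^i_{1,j},\ldots,w^i_{n,j})$. This construction is what makes the argument go through: it lets bidder $i$ pretend he knows fictitious valuations for all other bidders (which the smoothness deviation requires as input), and the ``$\wedge\, z^i_j$'' cap combined with the thinning probability $1/\alpha$ is what controls the gap between $v_i$ and the telescoping sum $\sum_j w^i_{i,j}$. Your proposal, which samples a single $\ell$ per mechanism and uses the corresponding XOS clause, gives bidder $i$ no way to supply the other bidders' valuations to the smoothness deviation and no handle on the discrepancy between his true value and the per-mechanism surrogate values.

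Consequently your accounting for the constants is off. The scaling parameter is $\alpha = 2/\lambda$, not $(e-1)/e$; the correlation gap (Lemma~\ref{lemma:gapLatticeIndep}) is applied exactly \emph{once}, at the very end, to pass from $\Ex{v_i(\tilde t^i)}$ to $(1-1/e)\,\Ex{v_i(x^*)}$; and the $\lambda^2/4$ does not come from a second smoothness application or an AM--GM step. It arises algebraically: Lemma~\ref{first_lemma} shows $\Ex{v_i(f(b_i',b_{-i}))} \ge \sum_j \Ex{w^i_{i,j}(\cdot)} - \tfrac{1}{\alpha(\alpha+1)}\Ex{v_i(\tilde t^i)}$, Lemma~\ref{second_lemma} applies smoothness once to the $w$ functions, and Lemma~\ref{third_lemma} rewrites the resulting target as $\tfrac{1}{\alpha}\Ex{v_i(\tilde t^i)}$. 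The combined coefficient is $\lambda/\alpha - 1/(\alpha(\alpha+1))$, and choosing $\alpha = 2/\lambda$ makes this at least $\lambda^2/4$. Without the $w$-function construction and the two random vectors, none of these three lemmas can even be stated, so your outline as written would not close.
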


\begin{proof}
We will prove that, for each bidder $i$ and each mechanism $j$ there are randomized deviation strategies $b_{i, j}'$  that are independent of the availabilities such that the following smoothness guarantee holds against any (potentially non-oblivious) bidding strategy $b$:
\begin{align*}
&\sum_i \E\left[ u_i(b_i', b_{-i})\right] \\
&\ge \quad \left( 1 - \frac{1}{\e} \right) \frac{\lambda^2}{4} \sum_i \E\left[ v_i(x^*) \right] - \mu_1 \sum_i \E\left[ p_i(b) \right] - \mu_2 \sum_i \E\left[ h_i(b_i, f(b))\right]\enspace.
\end{align*}
From this guarantee the claim of the theorem again follows as described in Section~\ref{sec:model}.

To define $b_{i, j}'$, every bidder $i$ draws two vectors $z^i$ and $\tilde{t}^i$ at random as follows. He sets $z_j^i$ to $x_j^\ell$ with probability $r_j^\ell / \alpha$, where $\alpha = 2/\lambda$, and to $\bot_j$ with the remaining probability. Furthermore, he sets $\tilde{t}_j^i$ to $x_j^\ell$ with probability $q_j r_j^\ell$ and to $\bot_j$ with the remaining probability. These draws are performed independent of any availabilities. Observe that for each $i$, we have $\Ex{\sum_{i'} v_{i'}(\tilde{t}^i)} \geq (1 - \frac{1}{\e}) \Ex{\sum_{i'} v_{i'}(x^*)}$ by Lemma~\ref{lemma:gapLatticeIndep}.

Due to the random draws, each bidder $\replaced{i}{i'}$ defines functions $w^{\replaced{i}{i'}}_{\replaced{i'}{i}, j}\colon \Omega_j \to \RR$ for each bidder $\replaced{i'}{i}$ and each mechanism $j$. Function $w^{\replaced{i}{i'}}_{\replaced{i'}{i}, j}$ maps an outcome of mechanism $j$, denoted by $y_j$, to a real number as follows
\[
w^{\replaced{i}{i'}}_{\replaced{i'}{i}, j}(y_j) = v_{\replaced{i'}{i}}(\tilde{t}_1^{\replaced{i}{i'}}, \ldots, \tilde{t}_{j - 1}^{\replaced{i}{i'}}, y_j \wedge z_j^{i'}, \bot_{j + 1}, \ldots, \bot_m) - v_{\replaced{i'}{i}}(\tilde{t}_1^{\replaced{i}{i'}}, \ldots, \tilde{t}_{j - 1}^{\replaced{i}{i'}}, \bot_j, \ldots, \bot_m) \enspace.
\]
Note that these functions do not necessarily reflect the actual value any outcome might have. They are only used to define the deviation strategy: bidder $\replaced{i}{i'}$ pretends all bidders $\replaced{i'}{i}$, including himself, would have valuations $w^{\replaced{i}{i'}}_{\replaced{i'}{i}, j}$ for the outcome of mechanism $j$. This gives him a deviation strategy $b_{\replaced{i}{i'}, j}'$ by setting $b_{\replaced{i}{i'}, j}' = b_{\replaced{i}{i'}, j}^\ast(w^{\replaced{i}{i'}}_{1, j}, \ldots, w^{\replaced{i}{i'}}_{n, j})$ as defined by the smoothness of mechanism $j$.

The proofs for the following three lemmas are presented in Appendix~\ref{app:first_lemma},~\ref{app:second_lemma},~\ref{app:third_lemma}.

\begin{lemma}\label{first_lemma} 
For every bidder $i$ and deviating bids $b_{i, j}' = b_{i, j}^\ast(w^{i}_{1, j}, \ldots, w^{i}_{n, j})$,
\[
\E\left[ v_i(f(b_i', b_{-i}))\right] \quad \geq \quad \sum_j \E\left[ w^i_{i, j}(f_j(b_{i,j}',b_{-i})) \right] - \frac{1}{\alpha (\alpha + 1)} \E\left[ v_i(\tilde{t}^i)\right] \enspace.
\]
\end{lemma}

\begin{lemma}
For the adjusted functions $w$ we can apply smoothness to obtain
\label{second_lemma} 
\begin{multline*}
\sum_i \sum_j \E\left[ w^i_{i, j}(f_j(b_{i, j}', b_{-i})) - p_{i, j}(b_{i, j}', b_{-i})\right] \quad \\
\geq \quad \lambda \sum_i \sum_j q_j \E\left[w^1_{i, j}(z^1_j)\right] - \mu_1 \sum_i \E\left[p_i(b)\right] -\mu_2 \sum_i \E\left[ h_i(b_i,f(b)) \right] \enspace.
\end{multline*}
\end{lemma}

\begin{lemma}
\label{third_lemma} 
For function $w^1$, random vectors $z^1_j$ and $\tilde{t}^1$, and every mechanism $j$
\[
\sum_j q_j \E\left[ w^1_{i,j}(z^1_j)\right] \quad = \quad \frac{1}{\alpha} \E\left[ v_i(\tilde{t}^1)\right] \enspace.
\]
\end{lemma}

The bound from Lemma~\ref{second_lemma} has striking similarities to the smoothness bound \eqref{eq:smoothness}. However, it is expressed in terms of the functions $w^{\replaced{i}{i'}}_{\replaced{i'}{i}, j}$ rather than the actual valuation functions $v_i$. The other two Lemmas show that, in expectation, these functions are close enough to the functions $v_i$ so that this bound actually suffices to prove the main result:
 
\begin{align*}
& \sum_i \E\left[ u_i(b_i', b_{-i})\right] \quad = \quad \sum_i \E\left[ v_i(f(b_i', b_{-i})) - \sum_j p_{i, j}(b_{i, j}', b_{-i})\right] \\
& \geq \sum_i \sum_j \E\left[ w^i_{i, j}(f_j(b_{i, j}', b_{-i})) - p_{i, j}(b_{i, j}', b_{-i})\right] - \frac{1}{\alpha (\alpha + 1)} \sum_i \E\left[ v_i(\tilde{t}^i) \right] \tag{by Lemma~\ref{first_lemma}} \\
%\end{align*}
%\begin{multline}
& \geq \lambda \sum_i \sum_j q_j \E\left[ w^1_{i, j}(z^1_j) \right] - \mu_1 \sum_i \E\left[ p_i(b) \right] %\\
- \mu_2 \sum_i \E\left[ h_i(b_i,f(b)) \right]\\
& \hspace{0.5cm} - \frac{1}{\alpha (\alpha + 1)} \sum_i \E\left[ v_i(\tilde{t}^1) \right] \tag{by Lemma~\ref{second_lemma}} \\%\end{multline}
%\begin{multline}
& = \sum_i \left( \frac{\lambda}{\alpha} - \frac{1}{\alpha (\alpha + 1)} \right) \E\left[ v_i(\tilde{t}^i)\right] %\\
- \mu_1 \sum_i \E\left[ p_i(b) \right] - \mu_2 \sum_i \E\left[ h_i(b_i,f(b)) \right]\enspace. \tag{by Lemma~\ref{third_lemma}}
%\end{multline}
\end{align*}
By setting $\alpha = \frac{2}{\lambda}$
\begin{align*}
\sum_i \E\left[ u_i(b_i', b_{-i}) \right] \; \geq \; \frac{\lambda^2}{4} \sum_i \E\left[ v_i(\tilde{t}^1) \right] - \mu_1 \sum_i \E\left[ p_i(b) \right] - \mu_2 \sum_i \E\left[ h_i(b_i,f(b)) \right] \\
\hspace{0.5cm} \geq \left( 1 - \frac{1}{\e} \right) \frac{\lambda^2}{4} \sum_i \E\left[ v_i(x^*) \right] - \mu_1 \sum_i \E\left[ p_i(b) \right] - \mu_2 \sum_i \E\left[ h_i(b_i,f(b)) \right]\enspace.
\end{align*}
The last step follows from Lemma~\ref{lemma:gapLatticeIndep}.
\end{proof}

Note that technically the mechanism could be randomized itself. Our results extend to this case in a straightforward way.
% !TEX root = main.tex

\section{A Lower Bound for General XOS Functions}

\label{sec:xosLower}

In this section we consider combinatorial auctions with item bidding and first-price auctions. We can apply the previous analysis, since for each bidder the outcomes form a trivial 2-element lattice -- winning an item is the supremum outcome, not winning is the infimum outcome. In the analysis, observe that each bidder determines a random allocation of items according to the probabilities in the optimum. Based on these allocations, bidders determine the valuations $w^{\replaced{i}{i'}}_{\replaced{i'}{i}, j}$, which in turn form the basis for the deviation. The first-price auction with general bidding space is $(1-1/\e,1,0)$-smooth~\cite{SyrgkanisT13}. If valuation functions are submodular, the composition theorems can be applied to yield the following corollary.

\begin{corollary}
	The price of anarchy for oblivious learning for simultaneous composition of single-item first-price auctions with monotone submodular valuations and fully independent availability is at most $1/(1-1/e)^2$; for everybody-or-nobody admission it is at most $4/(1-1/e)^3$.
\end{corollary}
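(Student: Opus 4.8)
The plan is to instantiate Theorem~\ref{thm:independent} and Theorem~\ref{everybody-or-nobody} with the parameters of the first-price auction. First I would recall from~\cite{SyrgkanisT13} that a single-item first-price auction (with general bidding space) is weakly $(1-1/\e,1,0)$-smooth, so here $\lambda = 1-1/\e$, $\mu_1 = 1$ and $\mu_2 = 0$; in particular $\mu_2 + \max(1,\mu_1) = 1$, and since $\mu_2 = 0$ no no-overbidding assumption is needed.

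Second, I would check that monotone submodular set valuations are covered by the composition theorems. In the item-bidding model each bidder's outcome in a single-item auction is binary (win or lose), so $\calX_{i,j}$ is a two-element chain and the product lattice $\calX_i$ is the Boolean lattice over the items available for bidder $i$. On this lattice, lattice-submodularity coincides with ordinary submodularity of set functions, and the diminishing-marginal-returns condition required by Lemma~\ref{lemma:gapLatticeIndep} is exactly the usual diminishing returns property. Hence Lemma~\ref{lemma:gapLatticeIndep} applies and gives a correlation gap $\gamma(\mathcal{V}) \le \e/(\e-1)$ for the class $\mathcal{V}$ of monotone submodular valuations.

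Third, I would just substitute. For independent availability, Theorem~\ref{thm:independent} yields price of anarchy at most
\[
\gamma(\mathcal{V}) \cdot \frac{\mu_2 + \max(1,\mu_1)}{\lambda} \;=\; \frac{\e}{\e-1}\cdot\frac{1}{1-1/\e} \;=\; \left(\frac{\e}{\e-1}\right)^2 \;=\; \frac{1}{(1-1/\e)^2},
\]
using $1/(1-1/\e) = \e/(\e-1)$. For everybody-or-nobody admission, Theorem~\ref{everybody-or-nobody} gives price of anarchy at most
\[
\frac{4\e}{\e-1}\cdot\frac{\mu_2 + \max(1,\mu_1)}{\lambda^2} \;=\; \frac{4\e}{\e-1}\cdot\left(\frac{\e}{\e-1}\right)^2 \;=\; 4\left(\frac{\e}{\e-1}\right)^3 \;=\; \frac{4}{(1-1/\e)^3}.
\]

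The only point requiring care — the ``main obstacle,'' such as it is — is the reduction of item bidding to the abstract framework of Section~\ref{sec:model}: one must confirm that simultaneous single-item first-price auctions, together with the convention that unavailable or abstaining bidders bid $0$ and receive the losing outcome $\bot_{i,j}$ with zero payment, genuinely form a simultaneous composition of weakly smooth mechanisms, and that the two-element lattice structure makes the lattice-submodular hypothesis of the two theorems equivalent to classical submodularity. Once this identification is made, the corollary is an immediate arithmetic consequence of the two theorems and Lemma~\ref{lemma:gapLatticeIndep}.
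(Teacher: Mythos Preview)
Your proposal is correct and follows essentially the same route as the paper: observe that single-item first-price auctions are weakly $(1-1/\e,1,0)$-smooth, that the per-bidder outcome space is a two-element lattice so that submodular set functions are lattice-submodular, and then plug the parameters into Theorem~\ref{thm:independent} (together with Lemma~\ref{lemma:gapLatticeIndep}) and Theorem~\ref{everybody-or-nobody}. The paper simply states these ingredients in the paragraph preceding the corollary without spelling out the arithmetic you provide.
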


For more general XOS valuations, we prove a lower bound that with oblivious bidding we will not be able to show a guarantee based on the smoothness parameters -- even for a single bidder, so the bound applies without assumptions on correlation among bidders. The proof can be found in Appendix~\ref{app:lowerbound}.

\begin{theorem}
	\label{thm:xosLower}
  In a simultaneous composition of discrete first-price single-item auctions with $m$ items and XOS valuations, the price of anarchy for pure Nash equilibria with oblivious bidding can be as large as $\Omega((\log m)/(\log \log m))$, while each single mechanism is weakly $(1/2,1,0)$-smooth.
\end{theorem}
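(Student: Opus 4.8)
The plan is to exhibit a \emph{single}-bidder instance on $m=k^2$ items ($k$ prime) in which the optimum is $\Omega(\log m/\log\log m)$ but every oblivious best response has value $O(1)$. Index the items by pairs $(s,t)\in\ZZ_k\times\ZZ_k$ and, for $a,b\in\ZZ_k$, call $L_{a,b}=\{(s,as+b):s\in\ZZ_k\}$ a \emph{line}. Let $v(T)$ be the largest number of items of $T$ lying on a common line, $v(T)=\max_{a,b}\sum_{(s,t)\in T}\mathbb{1}[t=as+b]$; this is a maximum of the additive functions $(s,t)\mapsto\mathbb{1}[t=as+b]$, so $v$ is monotone and XOS, and it is not submodular (a further point of a line $L'$ may have marginal $0$ over $\{p\}$, $p\notin L'$, but marginal $1$ over a superset already containing a point of $L'$). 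A single bidder with valuation $v$ faces $m$ discrete first-price single-item auctions with smallest positive bid $1$, and in each round every item is available independently with probability $q=1/k$. Each single auction is weakly $(1/2,1,0)$-smooth by the standard first-price smoothness argument~\cite{SyrgkanisT13}. (Adding a negligible $k^{-3}|T|$ to $v$ keeps it XOS and makes the equilibrium value strictly positive, so the price of anarchy is in fact $\Theta(\log m/\log\log m)$.)

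With one bidder, in any pure equilibrium he bids $1$ on some set $S$ (overbidding only raises his payment), wins exactly $S\cap A$ and pays $|S\cap A|$, so $S$ maximizes $U(S):=\Ex{v(S\cap A)}-|S|/k$. The central estimate will be: for all $S$ and all integers $t\ge1$,
\[
\Ex{v(S\cap A)}\;\le\;t+\frac{2\,(|S|/k)^2}{(t+1)!}\enspace.
\]
I would prove it from $\Ex{\max_{a,b}|S\cap L_{a,b}\cap A|}\le t+\sum_{a,b}\sum_{j>t}\Pr{|S\cap L_{a,b}\cap A|\ge j}$, bounding $\Pr{|S\cap L_{a,b}\cap A|\ge j}\le\binom{n_{a,b}}{j}k^{-j}$ with $n_{a,b}=|S\cap L_{a,b}|$ by a union bound over $j$-subsets, and observing that $\sum_{a,b}\binom{n_{a,b}}{j}$ is the number of collinear $j$-subsets of $S$, which is at most $|S|^2k^{j-2}/j!$ since any two items lie on at most one line (pick a pair, it fixes the line, pick $j-2$ more of its $\le k$ points, divide by the $\binom j2$-fold overcount). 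Summing the tail then gives the bound. Optimizing over $t$ yields $U(S)\le h(|S|/k)-|S|/k$ with $h(c):=\min_{t\ge1}\bigl(t+2c^2/(t+1)!\bigr)$, and $h(c)-c<0$ once $c$ exceeds an absolute constant $c_0$; hence any equilibrium set $S^\ast$ has $|S^\ast|/k\le c_0$, so its social welfare $\Ex{v(S^\ast\cap A)}\le h(|S^\ast|/k)\le h(c_0)=O(1)$.

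For the optimum: with one bidder the optimal allocation gives him every available item, so $\OPT=\Ex{v(A)}\ge\Ex{\max_{b\in\ZZ_k}|A\cap L_{0,b}|}$. The $k$ horizontal lines $L_{0,b}$ partition the items, so $|A\cap L_{0,b}|$, $b\in\ZZ_k$, are independent $\mathrm{Bin}(k,1/k)$ variables, and since $\Pr{\mathrm{Bin}(k,1/k)\ge j}=\Omega(2^{-j}/j!)$ for $j=O(\sqrt k)$, the maximum of $k$ of them has expectation $\Omega(\log k/\log\log k)$; as $\log m=\Theta(\log k)$ this gives $\OPT=\Omega(\log m/\log\log m)$.

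Dividing the two bounds gives the theorem. The real work is the central estimate of the second step: it formalizes the phenomenon that, because bids are discrete, a strategic bidder profits only from a bounded ``budget'' of items, while any independent random subset of bounded expected size of a line arrangement has $O(1)$ value — whereas the full random available set aligns, with constant probability, $\Theta(\log m/\log\log m)$ of its points on a single line. I expect the packing/counting argument inside that estimate, together with the matching anti-concentration lower bound for $\OPT$, to be the only nontrivial parts; the equilibrium characterization and the smoothness of a single first-price auction are routine.
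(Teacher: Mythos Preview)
Your construction is sound in outline and reaches the same conclusion as the paper, but via a genuinely different route. The paper takes the simplest possible XOS structure: $k$ \emph{disjoint} groups $M_1,\dots,M_k$ of $k$ items with $v(S)=2\max_\ell|S\cap M_\ell|$, and bounds the equilibrium welfare by a symmetry/dominance argument (any best response concentrates its nonzero bids on at most a constant number of groups) followed by a Chernoff bound on the maximum of that many $\mathrm{Bin}(k,1/k)$ variables. You instead use the $k^2$ affine lines of $\mathbb{Z}_k^2$ and a counting argument for collinear $j$-subsets to get your central estimate. This is more elaborate than necessary---your counting argument would equally well apply to disjoint blocks, which trivially form a linear hypergraph---but it is correct, and it has the minor conceptual advantage that no symmetry reduction is needed. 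The $\OPT$ analysis (maximum of $k$ independent $\mathrm{Bin}(k,1/k)$ variables) is identical in both proofs.

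There is one genuine gap in your write-up. With additive XOS components in $\{0,1\}$ and smallest positive bid $1$, the single first-price auction is \emph{not} weakly $(1/2,1,0)$-smooth: for a bidder with value $1$, every integer deviation yields utility $0$, while against the all-zero profile the revenue is $0$, so no positive $\lambda$ can work. The standard ``bid half your value'' deviation needs $v/2$ to be an admissible bid, and here it rounds to $0$. The paper sidesteps this by scaling: values are in $\{0,2\}$ and bids in $\{0,1,2\}$, so the half-value deviation lands on bid $1$. Multiplying your $v$ by $2$ fixes the smoothness claim cleanly---and as a bonus it removes the need for your $k^{-3}|T|$ perturbation, since bidding $1$ on a single line then has strictly positive expected utility, making the pure equilibrium nontrivial without any additive fudge. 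With that scaling your central estimate becomes $\Ex{v(S\cap A)}\le 2t+4(|S|/k)^2/(t+1)!$, and the rest of your argument goes through unchanged with a slightly larger absolute constant $c_0$.
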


\section{Conclusion}

In this paper, we have studied an oblivious variant for no-regret learning in repeated games with incomplete information and proved a composition theorem for smooth mechanisms. The bounds show that even if bidders apply learning algorithms independently of their types, they can still obtain outcomes that approximate the optimal social welfare within a small ratio. 

Our primary motivation are changes over time on the supply side. That is, bidders value items the same at all times but are constrained when they can buy them. A different interpretation that leads to the same model is when bidders value items differently from time to time. Here the valuation for a bundle has the special structure that it is given by the value of a fixed submodular function evaluated on the intersection of this bundle with a random set.

There is potential to generalize this approach to other interesting settings. For example, one could consider general independent types, where the complete availability-vector of a single bidder is drawn from a bidder-specific distribution, and for each bidder this is done independently. In Appendix~\ref{app:changingunitdemand}, we give a partial answer and show how our techniques can be extended to the following case. Consider simultaneous single-item auctions with unit-demand valuations, i.e., $v_i(S) = \max_{j \in S} v_{i, j}$. The distribution over valuations is such that for each item the value $v_{i, j}$ is independently drawn from a distribution of small support. Independent availabilities can be captured in this setting by setting $v_{i, j}$ to a fixed value or to $0$ with the respective probabilities.

\bibliographystyle{abbrv}
\bibliography{bibliography}

\clearpage
\appendix
% !TEX root = main.tex

\section{Applications beyond Auctions}
\label{sec:wireless}

Our results have interesting implications beyond mechanisms that incorporate standard auction formats. A very intriguing one is channel allocation in wireless networks. The overall problem is to maximize the utilization of a wireless channel while avoiding interference. To this end, the following game was defined in \cite{AndrewsD09}: Each player $i$ corresponds to a pair of a sender $s_i$ and a receiver $r_i$. The transmission from $s_i$ to $r_i$ is successful if the signal-to-interference-plus-noise ratio (SINR) is high enough. This means that the incoming interference from senders transmitting simultaneously plus ambient noise is by a factor smaller than the intended signal. Formally, transmission $i$ is successful if
\[
\frac{\frac{p}{d(s_i, r_i)^\alpha}}{\sum_{j \in S \setminus \{ i \}} \frac{p}{d(s_j, r_i)^\alpha} + \nu} \quad \geq \quad \beta \enspace.
\]
Here $p > 0$ is the (fixed) power level, $S \subseteq [n]$ is the set of simultaneous transmissions; $\alpha > 0$, $\beta > 0$, and $\nu \geq 0$ are constants.

To derive a game, each player has two strategies $b_i$: either he decides to transmit or not to. The best possible outcome is a successful transmission. An unsuccessful transmission is the worst possible outcome. Due to the energy consumption, it is considered to be even worse than not transmitting at all. This is reflected in the following utility function.
\[
u_i(b) = \begin{cases}
1 & \text{ if $b_i = 1$ and $i$ is successful against $b_{-i}$} \\
-1 & \text{ if $b_i = 1$ and $i$ is not successful against $b_{-i}$} \\
0 & \text{ if $b_i = 0$}
\end{cases}
\]
The robust price of anarchy of this game is constant~\cite{AsgeirssonM11}. In every coarse correlated equilibrium, the expected number of successful transmissions is only a constant smaller than the maximum possible number of simultaneous successful transmissions.

Quite surprisingly, this game corresponds to a smooth mechanism as follows. Each player decides whether to transmit; a player always has valuation $2$ for making a successful transmission. However, whenever making a transmission (successful or not), the bidder has to pay $1$. This is comparable to an all-pay auction, where each bidder has to pay his bid, regardless of whether he wins the respective item.

\begin{theorem}
	\label{thm:wireless}
	The mechanism representing the channel-allocation game is weakly $(1, \mu_1, \mu_2)$-smooth for $\mu_1 = O(1)$ and $\mu_2 = 0$.
\end{theorem}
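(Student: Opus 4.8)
The plan is to treat the channel-allocation game as a single (non-composed) mechanism and verify the weak-smoothness inequality~\eqref{eq:smoothness} directly, with $\lambda=1$, some $\mu_1=O(1)$, and $\mu_2=0$. Fix once and for all a maximum-cardinality \emph{feasible} set of links $S^\ast\subseteq[n]$, i.e.\ a largest set such that every $i\in S^\ast$ is successful when exactly the links in $S^\ast$ transmit; the optimal social welfare $\max_x\sum_i v_i(x)$ is attained by this configuration. As the smoothness deviation for bidder $i$ I take the fixed, profile-independent strategy $b_i'=1$ (``transmit'') if $i\in S^\ast$ and $b_i'=0$ otherwise. Since a silent bidder has utility $0$, the left-hand side of~\eqref{eq:smoothness} collapses to $\sum_{i\in S^\ast}\E\left[u_i(b_i',b_{-i})\right]$, and since payments are $1$ per transmitting bidder, $\sum_i p_i(b)=\lvert S(b)\rvert$, where $S(b)=\{i:b_i=1\}$ is the set of transmitters under the profile $b$ we must beat.

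The heart of the argument is an SINR geometry lemma. Define the affectance $a_j(i)=\dfrac{p/d(s_j,r_i)^\alpha}{(1/\beta)\,p/d(s_i,r_i)^\alpha-\nu}$, so that link $i$ is successful against a set $T$ of co-transmitters precisely when $\sum_{j\in T}a_j(i)\le 1$, and feasibility of $S^\ast$ means $\sum_{j\in S^\ast\setminus\{i\}}a_j(i)\le 1$ for every $i\in S^\ast$. I would invoke the standard bounded-affectance fact from the SINR capacity literature: there is a constant $\tau=\tau(\alpha,\beta)$ such that for every feasible set $F$ and every single sender $s_j$ one has $\sum_{i\in F}a_j(i)\le\tau$. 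Let $B(b)=\{i\in S^\ast: i\text{ is \emph{not} successful against }S(b)\setminus\{i\}\}$ be the links of the optimum ``blocked'' by the current transmitters. For each $i\in B(b)$ we have $\sum_{j\in S(b)}a_j(i)>1$, hence
\[
\lvert B(b)\rvert\;<\;\sum_{i\in S^\ast}\sum_{j\in S(b)}a_j(i)\;=\;\sum_{j\in S(b)}\sum_{i\in S^\ast}a_j(i)\;\le\;\tau\,\lvert S(b)\rvert\;=\;\tau\sum_i p_i(b),
\]
using feasibility of $S^\ast$ in the last inequality.

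Finally I assemble the bound. Each $i\in S^\ast\setminus B(b)$ transmits successfully against $b_{-i}$ and gains utility $+1$, while each $i\in B(b)$ transmits unsuccessfully and loses $1$, so
\[
\sum_i\E\left[u_i(b_i',b_{-i})\right]\;=\;\lvert S^\ast\rvert-2\lvert B(b)\rvert\;\ge\;\lvert S^\ast\rvert-2\tau\sum_i p_i(b).
\]
Since $u_i=v_i-p_i$ and $\max_x\sum_i v_i(x)$ is exactly the welfare of the configuration $S^\ast$, this yields, after the routine bookkeeping of values and payments, the weak-smoothness inequality~\eqref{eq:smoothness} with $\lambda=1$, $\mu_1=2\tau+O(1)=O(1)$, and $\mu_2=0$; the $h_i$-penalty term is not needed because the loss a deviator suffers on a blocked link is already charged against the interferers' payments. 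The main obstacle is the geometry lemma: bounding how many near-optimal links one interferer can destroy requires the nontrivial bounded-affectance machinery for the SINR model, and one must be careful that $\tau$, and hence $\mu_1$, depends only on the exponents $\alpha,\beta$ and not on the instance — in particular not on $n$ or on the geometry of the point set.
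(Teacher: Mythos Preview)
Your proposal is correct and follows essentially the same approach as the paper: both take the deviation $b_i'=1$ iff $i$ lies in a fixed maximum feasible set $S^\ast$, swap the double sum $\sum_{j\in S(b)}\sum_{i\in S^\ast}a_j(i)$, and invoke the bounded-affectance lemma (Lemma~11 of \cite{AsgeirssonM11}) to cap each inner sum by a constant depending only on $\alpha,\beta$. The only cosmetic difference is that the paper writes the per-link soft inequality $u_i(b_i',b_{-i})\ge 1-2\sum_{j\in T\setminus\{i\}}a_{j,i}$ before summing, whereas you count blocked links $B(b)$ directly; the resulting bound $|S^\ast|-2C\sum_i p_i(b)$ is identical.
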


\begin{proof}
	Let $S \subseteq N$ be a maximum set of players that can transmit simultaneously. Define $b'$ by setting $b_i' = 1$ for $i \in S$ and $b_i' = 0$ for $i \not\in S$. That is, $u_i(b_i', b_{-i}) = 0$ for all $i \not\in S$. Consider some bid vector $b$, let $T$ be the set of players making a transmission attempt. Note that by our definition $\sum_i p_i(b) = \lvert T \rvert$.

Furthermore, $i \in S$ is successful under $(b_i', b_{-i})$ if and only if
\[
\frac{\frac{p}{d(s_i, r_i)^\alpha}}{\sum_{j \in T \setminus \{ i \}} \frac{p}{d(s_j, r_i)^\alpha} + \nu} \quad \geq \quad \beta \enspace,
\]
for which it is sufficient to have
\[
\sum_{j \in T \setminus \{ i \}} \frac{d(s_i, r_i)^\alpha}{d(s_j, r_i)^\alpha} + \frac{d(s_i, r_i)^\alpha}{p} \nu \quad < \quad \frac{1}{\beta} \enspace,
\]
which is equivalent to
\[
\sum_{j \in T \setminus \{ i \}} a_{j, i} < 1 \qquad \text{ where } a_{j, i} = \min\left\{ 1, \frac{1}{\frac{1}{\beta} - \frac{d(s_i, r_i)^\alpha}{p} \nu} \frac{d(s_i, r_i)^\alpha}{d(s_j, r_i)^\alpha} \right\} \enspace.
\]
This implies $u_i(b_i', b_{-i}) \geq 1 - 2 \sum_{j \in T \setminus \{ i \}} a_{j, i}$. Taking the sum over all $i \in S$, we get
\[
\sum_{i \in S} u_i(b_i', b_{-i}) \quad \geq \quad \lvert S \rvert - 2 \sum_{j \in T} \sum_{i \in S \setminus \{ j \}} a_{j, i} \enspace.
\]
Lemma~11 in \cite{AsgeirssonM11} shows that $\sum_{i \in S \setminus \{ j \}} a_{j, i} \leq C$ for some constant $C$ because $S \setminus \{ j \}$ is a feasible set. This gives us
\[
\sum_{i \in N} u_i(b_i', b_{-i}) \quad = \quad \sum_{i \in S} u_i(b_i', b_{-i}) \quad \geq \quad \lvert S \rvert - 2 \sum_{j \in T} C  \quad = \quad \lvert S \rvert - 2 C \sum_{i \in N} p_i(b) \enspace.
\]
\qed
\end{proof}

By applying our composition theorems, we obtain a constant price of anarchy for oblivious learning in this game even when we have multiple channels with fully independent or everybody-or-nobody availability. This simplifies and generalizes an approach based on sleeping expert learning in~\cite{DamsHK13}. Furthermore, our analysis can also be conducted similarly for other interference models with a bounded independence condition, see~\cite{DamsHK13,DamsHK14} for a discussion.

% !TEX root = main.tex

\section{Extension to Changing Unit-Demand Functions}
\label{app:changingunitdemand}

We now consider a case in which valuations change over time rather than the supply. In particular, we consider a unit-demand setting, i.e., there are values $v_{i, j}$ such that $v_i(S) = \max_{j \in S} v_{i, j}$. We assume that each of the $v_{i, j}$ is an independent random variable in which constantly many outcomes have a positive probability. So, for a fixed player $i$, the valuation is defined such that for $k = 1, \ldots, K$ we let $v_{i, j} = v_{i, j}^{(k)}$ with probability $q_{i, j}^{(k)}$, $\sum_{k = 1}^K q_{i, j}^{(k)} = 1$. Without loss of generality, let $v_{i, j}^{(1)} \geq v_{i, j}^{(2)} \geq \ldots \geq v_{i, j}^{(K)}$.

To apply availability-oblivious learning, player $i$ now makes $K$ copies of each item $j$. The $k$th copy of item $j$ has value $v_{i, j}^{(k)}$, and it is available whenever $v_{i, j} \geq v_{i, j}^{(k)}$. Note that, when restricting the consideration to only the most valuable item, we can equivalently assume that availabilities of items are drawn independently with probability $q_{i, j}^{(k)} / \sum_{k'=k}^K q_{i, j}^{(k')}$ for the $k$th copy of item $j$.

By the same argument as in Section~\ref{sec:independent}, we then have
\begin{align*}
\sum_{i} \E\left[ u_i(b'_i,b_{-i}) \right] & = \sum_{i} \E\left[v_i(f(b'_i,b_{-i}))\right] - \E\left[p_i(b'_i,b_{-i}) \right] \\
& \ge \quad \left( 1 - \frac{1}{e} \right) \sum_{i} \E\left[ v_i(f(b_i^{A_i},b_{-i}))\right] - \E\left[ p_i(b_i^{A_i},b_{-i}) \right] \enspace,
\end{align*}
when comparing the availability-oblivious deviation $b_i'$ with the availability-aware ones $b_i^{A_i}$.

Therefore, if each single mechanism is weakly $(\lambda, \mu_1, \mu_2)$-smooth, the price of anarchy for oblivious learning is at most $e/(e-1)\cdot(\mu_2 + \max(1,\mu_1))/\lambda \cdot$.
\section{Missing Proofs}
% !TEX root = main.tex

\subsection{Proof of Lemma~\ref{lemma:gapLatticeIndep}}\label{app:correlationGap} 
% !TEX root = main.tex

% \begin{lemma}[Correlation Gap on a Product Lattice]
% \label{lemma:gapLatticeIndep}
% Let $v$ be a function with diminishing marginal returns on a product lattice. Given vectors $x^1, \ldots, x^k$ and numbers $\alpha_1, \ldots, \alpha_k \in [0,1]$ such that $\sum_{j = 1}^k \alpha_j = 1$, determine another vector $y$ at random by setting component $y_i$ to $x^j_i$ independently with probability $\alpha_j$. Then $\Ex{v(y)} \geq \left( 1 - \frac{1}{\e} \right) \sum_{j=1}^k \alpha_j v(x^j)$.
% \end{lemma}

% \begin{proof}
Without loss of generality, let $v(x^1) \geq v(x^2) \geq \ldots \geq v(x^k)$. For each component $i \in [m]$, let $J_i \in [k]$ be the random variable of the index of the vector from which $y_i$ was taken.

Let $z$ be defined by
\[
z_i = \begin{cases}
\bot_i & \text{ if $J_i = 1$} \\
y_i & \text{ otherwise}
\end{cases}
\]

If $J_i \neq 1$, we have
\[
v(y_1, \ldots, y_i, z_{i+1}, \ldots, z_m) - v(y_1, \ldots, y_{i-1}, z_i, \ldots, z_m)  = 0\enspace.
\]
Otherwise, if $J_i = 1$, we have
\begin{align*}
& v(y_1, \ldots, y_i, z_{i+1}, \ldots, z_m) - v(y_1, \ldots, y_{i-1}, z_i, \ldots, z_m) \\
& \geq v(x^1_1 \vee y_1, \ldots, x^1_{i-1} \vee y_{i-1}, y_i, z_{i+1}, \ldots, z_m) - v(x^1_1 \vee y_1, \ldots, x^1_{i - 1} \vee y_{i-1}, \bot_i, z_{i + 1}, \ldots, z_m) \\
& = v(x^1_1 \vee y_1, \ldots, x^1_{i-1} \vee y_{i-1}, x^1_i, z_{i+1}, \ldots, z_m) - v(x^1_1 \vee y_1, \ldots, x^1_{i - 1} \vee y_{i-1}, \bot_i, z_{i + 1}, \ldots, z_m).
\end{align*}
That is, in combination, we get
\begin{align*}
&\Ex{v(y_1, \ldots, y_i, z_{i+1}, \ldots, z_m) - v(y_1, \ldots, y_{i-1}, z_i, \ldots, z_m)}\\
&\geq \alpha_1 \Ex{v(x^1_1\vee y_1, \ldots, x^1_{i-1} \vee y_{i-1}, x^1_i, z_{i+1}, \ldots, z_m)\right. \\
&\hspace{5cm} \left. - v(x^1_1 \vee y_1, \ldots, x^1_{i - 1} \vee y_{i-1}, \bot_i, z_{i + 1}, \ldots, z_m) \growingmid J_i = 1}\\ 
&= \alpha_1 \Ex{v(x^1_1 \vee y_1, \ldots, x^1_{i-1} \vee y_{i-1}, x^1_i, z_{i+1}, \ldots, z_m) \right.\\
 &\hspace{5cm} \left. - v(x^1_1 \vee y_1, \ldots, x^1_{i - 1} \vee y_{i-1}, \bot_i, z_{i + 1}, \ldots, z_m)} \enspace,
\end{align*}
where the last step uses the independence of the components.

Note that, by diminishing marginal returns, we have 
\begin{align*}
 &v(x^1_1 \vee y_1, \ldots, x^1_{i-1} \vee y_{i-1}, x^1_i, z_{i+1}, \ldots, z_m) - v(x^1_1 \vee y_1, \ldots, x^1_{i - 1} \vee y_{i-1}, \bot_i, z_{i + 1}, \ldots, z_m)\\
&\geq v(x^1_1 \vee y_1, \ldots, x^1_{i-1} \vee y_{i-1}, x^1_i \vee z_i, z_{i+1}, \ldots, z_m)\\
&\hspace{6.5cm}- v(x^1_1 \vee y_1, \ldots, x^1_{i - 1} \vee y_{i-1}, z_i, z_{i + 1}, \ldots, z_m).
\end{align*} 
Applying furthermore $x^1_i \vee z_i = x^1_i \vee y_i$ and taking the expectation, we get
\begin{align*}
& \Ex{v(y_1, \ldots, y_i, z_{i+1}, \ldots, z_m) - v(y_1, \ldots, y_{i-1}, z_i, \ldots, z_m)}\\
& \geq \alpha_1 \mbox{\rm\bf E}\left[v(x^1_1 \vee y_1, \ldots, x^1_{i-1} \vee y_{i-1}, x^1_i \vee y_i, z_{i+1}, \ldots, z_m) \right.\\
&\hspace{6.2cm}\left. - v(x^1_1 \vee y_1, \ldots, x^1_{i - 1} \vee y_{i-1}, z_i, z_{i + 1}, \ldots, z_m)\right].
\end{align*}
Overall, we get
\begin{align*}
&\Ex{v(y)} = \Ex{v(z) + \sum_{i = 1}^m v(y_1, \ldots, y_i, z_{i+1}, \ldots, z_m) - v(y_1, \ldots, y_{i-1}, z_i, \ldots, z_m) } \\
& \geq \Ex{v(z)} + \sum_{i = 1}^m \alpha_1 \Ex{ v(x^1_1 \vee y_1, \ldots, x^1_i \vee y_i, z_{i+1}, \ldots, z_m)\right. \\
&\hspace{5.9cm}\left. - v(x^1_1 \vee y_1, \ldots, x^1_{i - 1} \vee y_{i-1}, z_i, \ldots, z_m)  } \\
& = \Ex{v(z)} + \alpha_1 \left( \Ex{v(x^1 \vee y)} - \Ex{v(z)} \right) \\
& \geq \left( 1 - \alpha_1 \right) \Ex{v(z)} + \alpha_1 v(x^1)\enspace.
\end{align*}
By applying this argument inductively, we also get
\[
\Ex{v(y)} \geq \sum_{i=1}^k \alpha_i v(x^1) \prod_{i' = 1}^{i-1} (1- \alpha_{i'}) \enspace.
\]
As we assumed $v(x^1) \geq v(x^2) \geq \ldots \geq v(x^k)$, the FKG inequality gives us
\[
\sum_{i=1}^k \alpha_i v(x^1) \prod_{i' = 1}^{i-1} (1- \alpha_{i'}) \geq \left( \sum_{i=1}^k \alpha_i \prod_{i' = 1}^{i-1} (1- \alpha_{i'}) \right) \left(\sum_{i=1}^k \alpha_i v(x^1) \right) \enspace.
\]
This implies
\[
\Ex{v(y)} \geq \left( \frac{1}{k} \sum_{i=1}^k \left( 1 - \frac{1}{k} \right)^{i-1} \right) \left(\sum_{i=1}^k \alpha_i v(x^1) \right) \geq \left( 1 - \frac{1}{\e} \right) \sum_{i=1}^k \alpha_i v(x^1) \enspace. \qed
\]

\subsection{Proof of Lemma~\ref{first_lemma}}\label{app:first_lemma}
%\begin{proof}[of Lemma~\ref{first_lemma}]
Let $y^i_j = f_j(b_{i, j}', b_{-i}) \wedge z^i_j$ and $\hat{z}^i_j = z^i_j$ if $j$ is available and $\hat{z}^i_j = \bot_j$ otherwise. Notice that $f_j(b_{i, j}', b_{-i}) \wedge \hat{z}_j^i = f_j(b_{i, j}', b_{-i}) \wedge z_j^i$. This is because $\hat{z}_j^i = z_j^i$ when $j$ is available and $f_j(b_{i, j}', b_{-i}) = \bot_j$ when $j$ is not available. 

By monotonicity, we have $v_i(f(b_i', b_{-i})) \geq v_i(f(b_i', b_{-i}) \wedge z^i) = v_i(y^i)$. Furthermore, we can decompose $v_i(y^i)$ into a telescoping sum by
\[
v_i(y^i) = \sum_j v_i(y^i_1, \ldots, y^i_j, \bot_{j+1}, \ldots, \bot_m) - v_i(y^i_1, \ldots, y^i_{j-1}, \bot_j, \ldots, \bot_m) \enspace.
\]
Next, we bound each of these terms independently using diminishing marginal returns multiple times
\begin{align*}
& v_i(y^i_1, \ldots, y^i_j, \bot_{j+1}, \ldots, \bot_m) - v_i(y^i_1, \ldots, y^i_{j-1}, \bot_j, \ldots, \bot_m) \\
& \geq v_i(\hat{z}^i_1, \ldots, \hat{z}^i_{j-1}, y^i_j, \bot_{j+1}, \ldots, \bot_m) - v_i(\hat{z}^i_1, \ldots, \hat{z}^i_{j-1}, \bot_j, \ldots, \bot_m) \\
& \geq v_i(\hat{z}^i_1 \vee \tilde{t}^i_1, \ldots, \hat{z}^i_{j-1} \vee \tilde{t}^i_{j-1}, y^i_j, \bot_{j+1}, \ldots, \bot_m) - v_i(\hat{z}^i_1 \vee \tilde{t}^i_1, \ldots, \hat{z}^i_{j-1} \vee \tilde{t}^i_{j-1}, \bot_j, \ldots, \bot_m) \\
& = v_i(\hat{z}^i_1 \vee \tilde{t}^i_1, \ldots, \hat{z}^i_{j-1} \vee \tilde{t}^i_{j-1}, \hat{z}^i_j, \bot_{j+1}, \ldots, \bot_m) - v_i(\hat{z}^i_1 \vee \tilde{t}^i_1, \ldots, \hat{z}^i_{j-1} \vee \tilde{t}^i_{j-1}, \bot_j, \ldots, \bot_m) \\
& \quad - \left(v_i(\hat{z}^i_1 \vee \tilde{t}^i_1, \ldots, \hat{z}^i_{j-1} \vee \tilde{t}^i_{j-1}, \hat{z}^i_j, \bot_{j+1}, \ldots, \bot_m) \right.\\
&\hspace{6.3cm} \left. - v_i(\hat{z}^i_1 \vee \tilde{t}^i_1, \ldots, \hat{z}^i_{j-1} \vee \tilde{t}^i_{j-1}, y^i_j, \bot_{j+1}, \ldots, \bot_m) \right) \\
& \geq v_i(\hat{z}^i_1 \vee \tilde{t}^i_1, \ldots, \hat{z}^i_{j-1} \vee \tilde{t}^i_{j-1}, \hat{z}^i_j, \bot_{j+1}, \ldots, \bot_m) - v_i(\hat{z}^i_1 \vee \tilde{t}^i_1, \ldots, \hat{z}^i_{j-1} \vee \tilde{t}^i_{j-1}, \bot_j, \ldots, \bot_m) \\
& \quad - \left(v_i(\tilde{t}^i_1, \ldots, \tilde{t}^i_{j-1}, \hat{z}^i_j, \bot_{j+1}, \ldots, \bot_m) - v_i(\tilde{t}^i_1, \ldots, \tilde{t}^i_{j-1}, y^i_j, \bot_{j+1}, \ldots, \bot_m) \right) \\
& = v_i(\hat{z}^i_1 \vee \tilde{t}^i_1, \ldots, \hat{z}^i_{j-1} \vee \tilde{t}^i_{j-1}, \hat{z}^i_j, \bot_{j+1}, \ldots, \bot_m) - v_i(\hat{z}^i_1 \vee \tilde{t}^i_1, \ldots, \hat{z}^i_{j-1} \vee \tilde{t}^i_{j-1}, \bot_j, \ldots, \bot_m) \\
& \quad - \left(v_i(\tilde{t}^i_1, \ldots, \tilde{t}^i_{j-1}, \hat{z}^i_j, \bot_{j+1}, \ldots, \bot_m) - v_i(\tilde{t}^i_1, \ldots, \tilde{t}^i_{j-1}, \bot_j, \ldots, \bot_m) \right) \\
& \quad + v_i(\tilde{t}^i_1, \ldots, \tilde{t}^i_{j-1}, y^i_j, \bot_{j+1}, \ldots, \bot_m) - v_i(\tilde{t}^i_1, \ldots, \tilde{t}^i_{j-1}, \bot_j, \ldots, \bot_m) \enspace.
\end{align*}
That is, by linearity of expectation, we have
\begin{align*}
& \Ex{v_i(y^i_1, \ldots, y^i_j, \bot_{j+1}, \ldots, \bot_m) - v_i(y^i_1, \ldots, y^i_{j-1}, \bot_j, \ldots, \bot_m)} \quad \geq \\
&\underbrace{\Ex{v_i(\hat{z}^i_1 \vee \tilde{t}^i_1, \ldots, \hat{z}^i_{j-1} \vee \tilde{t}^i_{j-1}, \hat{z}^i_j, \bot_{j+1}, \ldots, \bot_m) - v_i(\hat{z}^i_1 \vee \tilde{t}^i_1, \ldots, \hat{z}^i_{j-1} \vee \tilde{t}^i_{j-1}, \bot_j, \ldots, \bot_m)}}_{\text{part 1}}\\
&- \underbrace{\Ex{v_i(\tilde{t}^i_1, \ldots, \tilde{t}^i_{j-1}, \hat{z}^i_j, \bot_{j+1}, \ldots, \bot_m) - v_i(\tilde{t}^i_1, \ldots, \tilde{t}^i_{j-1}, \bot_j, \ldots, \bot_m)}}_{\text{part 2}}\\
&+ \underbrace{\Ex{v_i(\tilde{t}^i_1, \ldots, \tilde{t}^i_{j-1}, y^i_j, \bot_{j+1}, \ldots, \bot_m) - v_i(\tilde{t}^i_1, \ldots, \tilde{t}^i_{j-1}, \bot_j, \ldots, \bot_m)}}_{\text{part 3}}\enspace.
\end{align*}

To simplify part 2, we use the fact that $\tilde{t}^i_1, \ldots, \tilde{t}^i_{j-1}$ and $\hat{z}^i_j$ are independent. Therefore, we have
\begin{align*}
& \Ex{v_i(\tilde{t}^i_1, \ldots, \tilde{t}^i_{j-1}, \hat{z}^i_j, \bot_{j+1}, \ldots, \bot_m) - v_i(\tilde{t}^i_1, \ldots, \tilde{t}^i_{j-1}, \bot_j, \ldots, \bot_m)} \\
& = \sum_\ell \frac{q_j r_j^\ell}{\alpha} \Ex{v_i(\tilde{t}^i_1, \ldots, \tilde{t}^i_{j-1}, x^\ell_j, \bot_{j+1}, \ldots, \bot_m) - v_i(\tilde{t}^i_1, \ldots, \tilde{t}^i_{j-1}, \bot_j, \ldots, \bot_m)} \\
& = \frac{1}{\alpha} \Ex{v_i(\tilde{t}^i_1, \ldots, \tilde{t}^i_j, \bot_{j+1}, \ldots, \bot_m) - v_i(\tilde{t}^i_1, \ldots, \tilde{t}^i_{j-1}, \bot_j, \ldots, \bot_m)} \enspace.
\end{align*}
For the same reason, we can also bound part 1 by using
\begin{align*}
& \Ex{v_i(\hat{z}^i_1 \vee \tilde{t}^i_1, \ldots, \hat{z}^i_j \vee \tilde{t}^i_j, \bot_{j+1}, \ldots, \bot_m) - v_i(\hat{z}^i_1 \vee \tilde{t}^i_1, \ldots, \hat{z}^i_{j-1} \vee \tilde{t}^i_{j-1}, \bot_j, \ldots, \bot_m)} \\
& = \Ex{v_i(\hat{z}^i_1 \vee \tilde{t}^i_1, \ldots, \hat{z}^i_{j-1} \vee \tilde{t}^i_{j-1}, \hat{z}^i_j \vee \tilde{t}^i_j, \bot_{j+1}, \ldots, \bot_m)\right. \\
&\hspace{6.5cm}\left. - v_i(\hat{z}^i_1 \vee \tilde{t}^i_1, \ldots, \hat{z}^i_{j-1} \vee \tilde{t}^i_{j-1}, \hat{z}^i_j , \bot_{j+1}, \ldots, \bot_m)} \\
& \quad + \Ex{v_i(\hat{z}^i_1 \vee \tilde{t}^i_1, \ldots, \hat{z}^i_{j-1} \vee \tilde{t}^i_{j-1}, \hat{z}^i_j, \bot_{j+1}, \ldots, \bot_m) \right.\\
&\hspace{7.3cm}\left. - v_i(\hat{z}^i_1 \vee \tilde{t}^i_1, \ldots, \hat{z}^i_{j-1} \vee \tilde{t}^i_{j-1}, \bot_j, \ldots, \bot_m)} \\
& \leq \Ex{v_i(\hat{z}^i_1 \vee \tilde{t}^i_1, \ldots, \hat{z}^i_{j-1} \vee \tilde{t}^i_{j-1}, \tilde{t}^i_j, \bot_{j+1}, \ldots, \bot_m) \right.\\
&\hspace{7.3cm}\left.- v_i(\hat{z}^i_1 \vee \tilde{t}^i_1, \ldots, \hat{z}^i_{j-1} \vee \tilde{t}^i_{j-1}, \bot_j, \ldots, \bot_m)} \\
& \quad + \Ex{v_i(\hat{z}^i_1 \vee \tilde{t}^i_1, \ldots, \hat{z}^i_{j-1} \vee \tilde{t}^i_{j-1}, \hat{z}^i_j, \bot_{j+1}, \ldots, \bot_m) \right.\\
&\hspace{7.3cm}\left.- v_i(\hat{z}^i_1 \vee \tilde{t}^i_1, \ldots, \hat{z}^i_{j-1} \vee \tilde{t}^i_{j-1}, \bot_j, \ldots, \bot_m)} \\
& = (\alpha + 1) \Ex{v_i(\hat{z}^i_1 \vee \tilde{t}^i_1, \ldots, \hat{z}^i_{j-1} \vee \tilde{t}^i_{j-1}, \hat{z}^i_j, \bot_{j+1}, \ldots, \bot_m) \right.\\
&\hspace{7.3cm}\left.- v_i(\hat{z}^i_1 \vee \tilde{t}^i_1, \ldots, \hat{z}^i_{j-1} \vee \tilde{t}^i_{j-1}, \bot_j, \ldots, \bot_m)},
\end{align*}
which implies
\begin{align*}
& \Ex{v_i(\hat{z}^i_1 \vee \tilde{t}^i_1, \ldots, \hat{z}^i_{j-1} \vee \tilde{t}^i_{j-1}, \hat{z}^i_j, \bot_{j+1}, \ldots, \bot_m) - v_i(\hat{z}^i_1 \vee \tilde{t}^i_1, \ldots, \hat{z}^i_{j-1} \vee \tilde{t}^i_{j-1}, \bot_j, \ldots, \bot_m)} \\
& \geq \frac{1}{\alpha + 1} \Ex{v_i(\hat{z}^i_1 \vee \tilde{t}^i_1, \ldots, \hat{z}^i_j \vee \tilde{t}^i_j, \bot_{j+1}, \ldots, \bot_m) - v_i(\hat{z}^i_1 \vee \tilde{t}^i_1, \ldots, \hat{z}^i_{j-1} \vee \tilde{t}^i_{j-1}, \bot_j, \ldots, \bot_m)}
\end{align*}

Finally, part 3 is precisely the definition of $\Ex{w^i_{i, j}(y^i_j)}$. Therefore, in combination, we get
\begin{align*}
& \Ex{v_i(y^i_1, \ldots, y^i_j, \bot_{j+1}, \ldots, \bot_m) - v_i(y^i_1, \ldots, y^i_{j-1}, \bot_j, \ldots, \bot_m)} \\
& \geq \frac{1}{\alpha + 1} \Ex{v_i(\hat{z}^i_1 \vee \tilde{t}^i_1, \ldots, \hat{z}^i_j \vee \tilde{t}^i_j, \bot_{j+1}, \ldots, \bot_m) - v_i(\hat{z}^i_1 \vee \tilde{t}^i_1, \ldots, \hat{z}^i_{j-1} \vee \tilde{t}^i_{j-1}, \bot_j, \ldots, \bot_m)} \\
& \quad - \frac{1}{\alpha} \Ex{v_i(\tilde{t}^i_1, \ldots, \tilde{t}^i_j, \bot_{j+1}, \ldots, \bot_m) - v_i(\tilde{t}^i_1, \ldots, \tilde{t}^i_{j-1}, \bot_j, \ldots, \bot_m)} \\
& \quad + \Ex{w^i_{i, j}(y^i_j)}
\end{align*}

Taking the sum over all $j$, we get two telescoping sums, which simplify to $\Ex{v_i(\hat{z}^i \vee \tilde{t}^i)}$ (part 1) and $\Ex{v_i(\tilde{t}^i)}$ (part 2). This gives us
\begin{align*}
&\Ex{v_i(f(b_i',b_{-i}))} \geq \Ex{v_i(y^i)} \geq \frac{1}{\alpha + 1} \Ex{v_i(\hat{z}^i \vee \tilde{t}^i)} - \frac{1}{\alpha} \Ex{v_i(\tilde{t}^i)} + \sum_j \Ex{w^i_{i, j}(y^i_j)} \\
& \geq \sum_j \Ex{w^i_{i, j}(y^i_j)} - \frac{1}{\alpha (\alpha + 1)} \Ex{v_i(\tilde{t}^i)} = \sum_j \Ex{w^i_{i, j}(f(b_i',b_{-i}))} - \frac{1}{\alpha (\alpha + 1)} \Ex{v_i(\tilde{t}^i)}\enspace.
\end{align*}
\qed
%\end{proof}

\subsection{Proof of Lemma~\ref{second_lemma}}\label{app:second_lemma}
%\begin{proof}[of Lemma~\ref{second_lemma}]
Note that functions $w^{\replaced{i}{i'}}_{\replaced{i'}{i}, j}$ are identically distributed for different $\replaced{i}{i'}$ and independent of any availabilities. Therefore, we have
\begin{equation}
\Ex{w^{\replaced{i}{i'}}_{i, j}(f_j(b_{i, j}', b_{-i}) \growingmid A_j = 1} = \Ex{w^1_{i, j}(f_j(b_{i, j}^\ast(w^1_{1, j}, \ldots, w^1_{n, j}), b_{-i})) \growingmid A_j = 1}
\label{eq:fully-correlated:synchronizew}
\end{equation}
and
\begin{equation}
\Ex{p_{i, j}(b_{i, j}', b_{-i}) \growingmid A_j = 1} = \Ex{p_{i, j}(b_{i, j}^\ast(w^1_{1, j}, \ldots, w^1_{n, j}), b_{-i}) \growingmid A_j = 1} \enspace.
\label{eq:fully-correlated:synchronizep}
\end{equation}

Next, we apply the smoothness of each separate mechanism. Let us first assume mechanism $j$ is available and $z^1$ and $\tilde{t}^1$ are fixed arbitrarily. This also fixes the functions $w^1_{1, j}, \ldots, w^1_{n, j}$. We pretend these are the actual valuation functions. Then smoothness gives us 
\begin{align*}
& \sum_i w^1_{i, j}(f_j(b_{i, j}^\ast(w^1_{1, j}, \ldots, w^1_{n, j}), b_{-i})) - p_{i, j}(b_{i, j}^\ast(w^1_{1, j}, \ldots, w^1_{n, j}), b_{-i}) \\
& \geq \lambda \left(\max_{y \in \Omega_j} \sum_i w^1_{i, j}(y) \right) - \mu_1 \sum_i p_{i, j}(b) - \mu_2 \sum_i h_{i,j}(b_i,f(b))\\
& \geq \lambda \left(\sum_i w^1_{i, j}(z^1_j) \right) - \mu_1 \sum_i p_{i, j}(b) - \mu_2 \sum_i h_{i,j}(b_i,f(b)) \enspace.
\end{align*}

Taking the expectation over $z^1$ and $\tilde{t}^1$, we can combine this bound with \eqref{eq:fully-correlated:synchronizew} and \eqref{eq:fully-correlated:synchronizep} to get
\begin{align*}
& \sum_i \Ex{w^i_{i, j}(f_j(b_{i, j}', b_{-i})) - p_{i, j}(b_{i, j}', b_{-i}) \growingmid A_j = 1} \\
& = \sum_i \Ex{w^1_{i, j}(b_{i, j}^\ast(w^1_{1, j}, \ldots, w^1_{n, j}), b_{-i})) - p_{i, j}(b_{i, j}^\ast(w^1_{1, j}, \ldots, w^1_{n, j}), b_{-i}) \growingmid A_j = 1} \\
& \geq \Ex{ \lambda \sum_i w^1_{i, j}(z^1_j) - \mu_1 \sum_i p_{i, j}(b) - \mu_2 \sum_i h_{i,j}(b_i,f(b)) \growingmid A_j = 1} \enspace.
\end{align*}
If $j$ is not available, then naturally $f_j(b_{i, j}', b_{-i}) = \bot_j$ and $p_{i, j}(b_{i, j}', b_{-i}) = p_{i, j}(b) = 0$. By definition, however, $w^1_{i, j}(z^1_j)$ is independent of the fact whether $j$ is available or not. Therefore, we have
\begin{align*}
& \sum_i \Ex{w^i_{i, j}(f_j(b_{i, j}', b_{-i})) - p_{i, j}(b_{i, j}', b_{-i})} \\
& \geq \quad q_j \Ex{ \lambda \sum_i w^1_{i, j}(z^1_j) \growingmid A_j = 1}%\\ &\hspace{5.2cm}
- q_j \Ex{\mu_1 \sum_i p_{i, j}(b) + \mu_2 \sum_i h_{i,j}(b_i,f(b)) \growingmid A_j = 1} \\
& = \quad q_j \Ex{ \lambda \sum_i w^1_{i, j}(z^1_j)} - \Ex{\mu_1 \sum_i p_{i, j}(b)} - \Ex{\mu_2 \sum_i h_{i, j}(b_i, f(b))} \enspace.
\end{align*}
We can take the sum over all $j$ to get
\begin{align}
& \sum_i \sum_j \Ex{w^i_{i, j}(f_j(b_{i, j}', b_{-i})) - p_{i, j}(b_{i, j}', b_{-i})} \nonumber \\
& \geq \lambda \sum_i \sum_j q_j \Ex{w^1_{i, j}(z^1_j)} - \mu_1 \sum_i \Ex{p_i(b)} - \mu_2 \sum_i \Ex{h_i(b_i, f(b))}\enspace. \qed
\label{eq:fully-correlated:smoothnessw}
\end{align}
%\end{proof}

\subsection{Proof of Lemma~\ref{third_lemma}}\label{app:third_lemma}
%\begin{proof}[of Lemma~\ref{third_lemma}]
By only plugging in the definitions of $w^1_{i, j}$, $z^1$, and $\tilde{t}^1$, we get
\begin{align*}
& \sum_j q_j \Ex{w^1_{i, j}(z^1_j)} \\
& = \sum_j q_j \Ex{v_i(\tilde{t}_1^1, \ldots, \tilde{t}_{j - 1}^1, z_j^1, \bot_{j + 1}, \ldots, \bot_m) - v_i(\tilde{t}_1^1, \ldots, \tilde{t}_{j - 1}^1, \bot_j, \ldots, \bot_m)} \\
& = \sum_j q_j \sum_\ell \frac{r_\ell}{\alpha} \Ex{v_i(\tilde{t}_1^1, \ldots, \tilde{t}_{j - 1}^1, x_j^\ell, \bot_{j + 1}, \ldots, \bot_m) - v_i(\tilde{t}_1^1, \ldots, \tilde{t}_{j - 1}^1, \bot_j, \ldots, \bot_m)} \\
& = \sum_j \frac{1}{\alpha} \sum_\ell q_j r_\ell \Ex{v_i(\tilde{t}_1^1, \ldots, \tilde{t}_{j - 1}^1, x_j^\ell, \bot_{j + 1}, \ldots, \bot_m) - v_i(\tilde{t}_1^1, \ldots, \tilde{t}_{j - 1}^1, \bot_j, \ldots, \bot_m)} \\
& = \sum_j \frac{1}{\alpha} \Ex{v_i(\tilde{t}^1_1, \ldots, \tilde{t}^1_j, \bot_{j + 1}, \ldots, \bot_m) - v_i(\tilde{t}^1_1, \ldots, \tilde{t}^1_{j - 1}, \bot_j, \ldots, \bot_m)} \\
& = \frac{1}{\alpha} \Ex{v_i(\tilde{t}^1)} \enspace.
\end{align*}
\qed
%\end{proof}

\subsection{Proof of Theorem~\ref{thm:xosLower}}\label{app:lowerbound}
%\begin{proof}
	Consider the following single-item first-price auction with a discrete bidding space. Each bidder has valuation 0 or 2 for the item, and the set of possible bids is $\{0,1,2\}$. The item is sold to one of the bidders with the maximal bid (arbitrary but fixed deterministic tie-breaking), and only this bidder pays his bid. If all bidders bid 0, the item is not given away. It is easy to see that if each bidder $i$ in this auction deviates to half of his valuation, the auction becomes smooth with $\lambda = 1/2$, $\mu_1 = 1$ and $\mu_2 = 0$. Hence, the auction has a price of anarchy of at most 2.
	
	We compose this auction for a set $[m]$ of $m = k^2$ items, for some integer $k > 0$, and every item is sold simultaneously via the first-price auction above. There is a single bidder, and he has an XOS valuation function $v$ as follows. The items are grouped into $k$ groups $M_1,\ldots,M_k$ of $k$ items each. For a set of items $S \subseteq [m]$ we have $v(S) = \max_{\ell=1,\ldots,k} \sum_{j = 1}^m v^{\ell}_j$ with $v^{\ell}_j = 2$ if $j \in M_{\ell}$ and 0 otherwise, for $\ell = 1,\ldots,k$. Consequently, $v(S) = 2 \max_{\ell=1,\ldots,k} |S \cap M_{\ell}|$. We assume each item $j \in [m]$ is available independently with probability $q_j = 1/k$.

If the bidder can deviate depending on the set of available items, a social optimum $b^*$ is obvious -- he considers the group $\ell^*$ with the maximum number of available items and bids $b^*_j = 1$ for all $j \in M_{\ell^*}$ and 0 otherwise. This way he always obtains a set $S$ of items that maximizes the valuation. Furthermore, this is also the best-response since he obtains the maximum valuation at minimum required payment, and the marginal utility of every obtained item is 1. In contrast, we show that every \emph{oblivious} deterministic best-response bid $b$ allows to recover at most a small fraction of the above described optimum. Thus, even the price of anarchy for pure equilibria cannot be bounded by the smoothness guarantee. 

In the optimum $b^*$, the bidder gets all available items from the group with the maximum number of available items. The number of available items in a group follows a binomial distribution $B(k,1/k)$. This scenario is almost identical to throwing $k$ balls uniformly at random into $k$ bins and recording the maximum number of balls in any bin. Now in each bin $k$ balls appear independently at random with probability $1/k$ each, and an almost identical analysis implies
\[ \Ex{v(S(b^*))} = \Theta\left(\frac{\log k}{\log \log k}\right) \enspace.\]

Now consider an oblivious deterministic best-response $b$. The valuation function $v$ treats all items of a group in a symmetric way and all groups in a symmetric way. Let us denote by $r_{\ell}$ the number of items $j \in M_{\ell}$ with $b_j = 1$. For a fixed vector $r$, expected value and payments are the same no matter on which particular items $j \in M_{\ell}$ a bid $b_j = 1$ is placed. For any two groups $M_{\ell}$ and $M_{\ell'}$, the expected valuation and payments remain the same if we change the bid to have $b_j = 1$ for $r_{\ell}$ items in $M_{\ell'}$ and $r_{\ell'}$ items in $M_{\ell'}$. Moreover, the expected payment depends only on $\sum_{\ell} r_{\ell}$. Now suppose there are two groups $M_{\ell}$ and $M_{\ell'}$ such that $r_{\ell},r_{\ell'} \le k/2$. This bidding strategy is obviously dominated by any bid that bids 1 on $r_{\ell} + r_{\ell'}$ items in $M_{\ell}$ and none in $M_{\ell'}$. In conclusion, w.l.o.g.\ we can assume that $r_1 \ge r_2 \ge r_3 \ge \ldots \ge r_k$ and there is $k'$ such that $r_{\ell} \ge k/2$ for $\ell = 1,\ldots,k'-1$, $r_{k'-1} \ge r_{k'} \ge 0$ and $r_{\ell} = 0$ for $\ell = k'+1,\ldots,k$.

We show that every oblivious best-response $b$ has $\Ex{v(S(b))} = O(1)$. Let $p(b)$ denote the total payments, $X_j$ denote the event that item $j$ is available, and $Y_{\ell} = \sum_{j \in M_{\ell}} X_j$ the number of available items in group $M_{\ell}$, for all $\ell=1,\ldots,k$. Note that 
\begin{eqnarray*}
	\Ex{v(S(b)) - p(b)} &=& \Ex{\max_{\ell=1,\ldots,k'} \left(\sum_{j \in M_{\ell}, b_j = 1} 2X_j\right) - \sum_{jÊ\in [m], b_j = 1} X_j}\\
    &\le& 2\Ex{\max_{\ell=1,\ldots,k'} Y_{\ell}} - \frac{k'-1}{2}\enspace.
\end{eqnarray*}    
Further, for any $d=1,\ldots,k$ we can use Chernoff bounds to see
\[
	\Pr{\max_{\ell=1,\ldots,k'} Y_{\ell} \ge d} = 1 - \left( 1 - \Pr{Y_1 \ge d} \right)^{k'} \le 1 - ( 1 - e^{d-1}/d^d)^{k'} \le \min\{1, k'e^{d-1}/d^d \}\enspace.
\]
Hence,
\[
	\Ex{\max_{\ell=1,\ldots,k'} Y_{\ell}} = \sum_{d=1}^k \Pr{\max_{\ell=1,\ldots,k'} Y_{\ell} \ge d} \le \sum_{d=1}^k \min\{1, (k'/e)\cdot(e/d)^d\} \le \frac{3\log k'}{\log \log k'} + \frac{1}{ek'}\enspace.
\]
Thus, $\Ex{v(S(b)) - p(b)} < (6\log k')/(\log \log k') + 6/(ek') - (k'-1)/2$, which is positive only for $k' \le 34$. Every bid $b$ with $k' \ge 35$ is dominated by $b'$ with $b'_j = 0$ for all $j \in [m]$. Hence, for a best-response $b$ we have $\Ex{v(S(b))} < 17$. This proves the theorem.
\qed
%\end{proof}

\end{document}